\documentclass[preprint,12pt,3p,onecolumn]{elsarticle}

\usepackage{subfig}
\usepackage{graphicx}% Include figure files

\usepackage{dcolumn}% Align table columns on decimal point
\usepackage{bm}% bold math
\usepackage{epstopdf}
\epstopdfsetup{update} % only regenerate pdf files when eps file is newer
\usepackage{epsfig}
\usepackage[colorlinks = true,
linkcolor = blue,
urlcolor  = blue,
citecolor = blue,
anchorcolor = blue]{hyperref}% add hypertext capabilities
\usepackage{url}
\usepackage{textcomp}
%\usepackage[mathlines]{lineno}% Enable numbering of text and display math
%\linenumbers\relax % Commence numbering lines
%\DeclareMathOperator{\sech}{sech}

\usepackage{amsmath}
\usepackage{amssymb}
\usepackage{threeparttable,multirow}
\usepackage{amsfonts}
\usepackage{mathrsfs}
\usepackage{graphicx}
\usepackage{hyperref}
\usepackage{amsthm}
\usepackage{xcolor}
\newtheorem{theorem}{Theorem}
\newtheorem{lemma}{Lemma}

%\newtheorem{proof}{Proof}

%\journal{Journal of Differential Equation}

\begin{document}

\begin{frontmatter}

\title{Justification 
	of the discrete nonlinear Schr\"odinger equation from a parametrically driven damped nonlinear Klein-Gordon equation and numerical comparisons }

\author[label1,label5]{Y.\ Muda}
\address[label1]{Department of Mathematical Sciences,University of Essex, Colchester, CO4 3SQ, United Kingdom}
\address[label5]{Department of Mathematics, Faculty of Science and Technology,\\ 
	State Islamic University of Sultan Syarif Kasim Riau, Pekanbaru, 28294,  Indonesia\fnref{label4}}
\address[label2]{Theoretical Physics Laboratory,Theoretical High Energy Physics and Instrumentation Research Group, Faculty of Mathematics and Natural Sciences,  Institut Teknologi Bandung,  Bandung, Indonesia, 40132\fnref{label4}}
\address[label6]{Centre of Mathematical Modelling and Simulation, Institut Teknologi Bandung, 1st Floor, Labtek III,\\ Jl.\ Ganesha No.\ 10, Bandung, 40132, Indonesia}

\cortext[cor1]{Corresponding author}
%\fntext[label3]{On leave study from Department of Mathematics, Faculty of Science and Technology, State Islamic University, Sultan Syarif Kasim Riau, Pekanbaru 28294, Indonesia}
%\fntext[label4]{Small city}

\ead{ymuda@essex.ac.uk}
%\ead[url]{author-one-homepage.com}

\author[label2]{F.T.\ Akbar}
%\address[label5]{Some University}
\ead{ftakbar@fi.itb.ac.id}

\author[label1,label6]{R.\ Kusdiantara}
\ead{rkusdi@essex.ac.uk}

\author[label2]{B.E.\ Gunara}
\ead{bobby@fi.itb.ac.id}

\author[label1]{H.\ Susanto\corref{cor1}}
\ead{hsusanto@essex.ac.uk}

\begin{abstract}
We consider a damped, parametrically driven discrete nonlinear Klein-Gordon equation, that models coupled pendula and micromechanical arrays, among others. To study the equation, one usually uses a small-amplitude wave ansatz, that reduces the equation into a discrete nonlinear Schr\"odinger equation with damping and parametric drive. Here, we justify the approximation by looking for the error bound with the method of energy estimates. Furthermore,  we prove the local and global existence of {solutions to the discrete nonlinear} Schr\"odinger equation.  To illustrate the main results, we consider numerical simulations showing the dynamics of errors made by the discrete nonlinear equation. We consider two types of initial conditions, with one of them being a discrete soliton of the nonlinear Schr\"odinger equation, that is expectedly approximate discrete breathers of the nonlinear Klein-Gordon equation.
\end{abstract}

\begin{keyword}
%% keywords here, in the form: keyword \sep keyword
discrete nonlinear Schr\"odinger equation \sep discrete Klein-Gordon equation \sep small-amplitude approximation\sep discrete breather \sep discrete soliton
%% MSC codes here, in the form: \MSC code \sep code
%% or \MSC[2008] code \sep code (2000 is the default)
\end{keyword}

\end{frontmatter}

%%
%% Start line numbering here if you want
%%
% \linenumbers

%% main text
\section{Introduction}\label{intro}

We consider %a lattice model governed by 
the following parametrically driven discrete Klein-Gordon (dKG) equation with damping 
\begin{equation}\label{Parametric}
	\ddot{u}_j = -u_j - \xi u_j^3 + \varepsilon^2\Delta u_j - \alpha \dot{u}_j + H\cos\left(2\Omega t\right)% \frac{H}{2}(e^{2i\Omega t} + c.c)
	u_j,
\end{equation}
{where $u_j \equiv u_j(t)$ is a real-valued wave function at site $j$, the overdot {denotes} the time derivative and $\varepsilon^2$ represents the coupling constant between two adjacent sites, with $\Delta u_j = u_{j+1} - 2u_j + u_{j-1}$ being the {one dimensional} discrete Laplacian. The {positive} parameters $\alpha$ and $H$ denote the damping coefficient and the strength of the parametric drive, respectively. The real constant $\xi$ is the nonlinearity coefficient and $\Omega$ is the driving frequency. %we set $\Omega = 1+\frac{\varepsilon^2\Lambda}{2}$. 
The governing equation \eqref{Parametric} is relevant to the experimental study of localised structures in coupled pendula \cite{dena92,cuev09} and micromechanical arrays \cite{buks02}. 
	
To analyse the equation, one usually uses a multiple scale expansion and the rotating wave approximation under the assumption of small wave amplitudes, that lead to a damped, parametrically driven discrete nonlinear Schr\"odinger (dNLS) equation \cite{dena92b,putt93,lifs03,morgante2002standing}. Using the scaling $\alpha = \varepsilon^2\hat{\alpha}$, $H= 2\varepsilon^2h$, $\Omega = 1+{\varepsilon^2\Lambda}/{2}$, and a slow time variable $\tau = {\varepsilon^2 t}/{2}$, we consider a $(2:1)$ parametric resonance and define a slowly varying approximation to the solutions of the dKG lattice equation \eqref{Parametric} 
\begin{equation} \label{ansatzP}
u_j(t) \approx \phi_j(t) = \varepsilon\;A_j(\tau) e^{i\Omega t} + \frac{\varepsilon^3}{8}\left[\xi A_j(\tau)^3 - h A_j(\tau)\right] e^{3i\Omega t} + \mathrm{c.c.},
\end{equation}
%	\begin{equation}\label{ansatz1}		u_j(t)\approx \phi_j(t) = \varepsilon \left[A_j\left(\tau\right) e^{i \Omega t} + \bar{A}_j\left(\tau\right) e^{-i \Omega t}\right], 	\end{equation}
that will yield the dNLS equation 
	\begin{equation}\label{dNLS}
		i \dot{A}_j = \Delta A_j  - i \hat{\alpha} 
		A_j + \Lambda  A_j - 3 \xi  |A_j|^2 A_j + h \bar{A}_j\:.
	\end{equation}
Here, the dot denotes derivative with respect to the slow time $\tau$, which implies that the approximation \eqref{ansatzP}-\eqref{dNLS} is expected to be valid until $t\sim\mathcal{O}(2/\varepsilon^2)$. The abbreviation $\mathrm{c.c.}$ means the complex conjugate of the preceding terms. It should be clear by now that the coupling constant (i.e., the prefactor of the discrete Laplacian term) is scaled to $\varepsilon^2$ only for the sake of convenience, so that $u_j = \mathcal{O}(\varepsilon)$. Replacing $\varepsilon\to\sqrt{\varepsilon}$ will yield the standard scaling used, e.g., in \cite{pelinovsky2016approximation}. Our scaling may also be interpreted that instead of using the coupling constant as a measure of the smallness, we use the solution amplitude.

The presence of parametric drive and damping in the dNLS equation was possibly first studied in \cite{hennig1999periodic}, where the existence of localised solutions was discussed using a nonlinear map approach. It was shown that numerous types of localised states emerge from the system depending on the strength of the parametric driving. The parametrically driven dNLS equation (\ref{dNLS}) was studied in \cite{susanto2006stability,syafwan2010discrete}, where it was shown that the parametric drive can change the stability of fundamental discrete solitons, i.e., it can destroy onsite solitons as well as restore the stability of intersite discrete solitons, both for bright and dark cases. In \cite{syaf12}, breathers of \eqref{dNLS}, i.e., spatially localised solutions with periodically time varying $|A_j(\tau)|$ emanating from Hopf bifurcations, were studied systematically. 
	
Despite the wide interests in both equations \eqref{Parametric} and \eqref{dNLS}, the reduction from the former to the latter has not been rigorously justified. Without damping and parametric drive, the analysis was provided rather recently by Pelinovsky, Penati and Paleari \cite{pelinovsky2016approximation}. However, as the presence of damping and drive will certainly %affect the errors made in the 
require modifications in the justification of the 
reduction, here we address the problem, which will be the primary aim of the paper. %Moreover, unlike \cite{pelinovsky2016approximation}, our equations are not Hamiltonian. Nevertheless, we are able to prove the global existence and error bound of the rotating wave approximation in $\ell^2$-space.  
Following \cite{pelinovsky2016approximation}, we use an energy estimate method. The method has been used as well in various systems of differential equations, see, e.g., \cite{bamb06,bide13,duma14,frie99,schn99}.

The paper is organized as follows. Mathematical formulations to obtain the uniqueness and global existence of solutions to the dNLS equation \eqref{dNLS} are given in Section \ref{sec:1}. %The results informs the global bound of the residual term In the section, and error estimate  
In Section \ref{sec3}, we discuss an error bound estimation of the rotating wave approximation, which leads to the main result of the paper, i.e., Theorem \ref{theorem1}. Finally, in Section \ref{sec4} we illustrate the main results by considering the evolution of errors made by the rotating wave approximation for two different initial conditions, with one of them corresponding to %discrete breathers of the nonlinear Klein-Gordon equation \eqref{Parametric} and compare them with 
discrete solitons of the nonlinear Schr\"odinger equation \eqref{dNLS}. %provide a numerical description of the error bound as well as comparisons between breathers of the dKG equation and discrete solitons of the dNLS equation .
	
\section{Analytical formulation and preliminary results}\label{sec:1}

{Substituting the slowly varying approximation ansatz \eqref{ansatzP} into the original dKG equation \eqref{Parametric} and taking into account the dNLS equation \eqref{dNLS}, we obtain the residual terms in the form of}
\begin{equation}
\begin{aligned}
&R_j(t) :=\nonumber\\
& \varepsilon^5 \left[ \frac{e^{i\Omega t}}{8} \left( -3 h \xi  A_j \bar{A}_j^2 + 3 \xi ^2 A_j^3 \bar{A}_j^2 + h^2 A_j - h \xi  A_j^3 + 4 i \alpha  \Lambda  A_j + 4 \alpha  \dot{A}_j - 2\Lambda ^2 A_j + 4 i \Lambda  \dot{A}_j + 2\ddot{A}_j \right)\right. \nonumber \\ 
& \quad \left.+ \frac{e^{3i\Omega t}}{8} \left( -6 h \xi  A_j^2 \bar{A}_j + 6 \xi ^2 A_j^4 \bar{A}_j - 3 i \alpha  h A_j + 9 h \Lambda  A_j - 3 i h \dot{A}_j - 2 h A_j + h A_{j-1}+\frac{1}{8} h A_{j+1} \right.\right. \nonumber \\
& \quad \left.\left.+ 3 i \alpha  \xi  A_j^3 - 9 \xi  \Lambda  A_j^3 + 9 i \xi  A_j^2 \dot{A}_j + 2\xi  A_j^3 - \xi  A_{j-1}^3 - \xi  A_{j+1}^3 \right.\bigg)\right.\nonumber\\
&\quad\left.+ \frac{e^{5i\Omega t}}{8} \left( h^2 A_j - 4 h \xi  A_j^3 + 3 \xi ^2 A_j^5 \right) \right]  \nonumber\\
& + \varepsilon^7 \left[ \frac{3e^{i\Omega t}}{32}  \left(  h^2 \xi  A_j^2 \bar{A}_j - h \xi ^2 A_j^4 \bar{A}_j - h \xi ^2 A_j^2 \bar{A}_j^3 +  \xi ^3 A_j^4 \bar{A}_j^3 \right) +  \frac{e^{3i\Omega t}}{32} \left( -6 i \alpha  h \Lambda  A_j - 2 \alpha  h \dot{A}_j - h \ddot{A}_j\right. \right. \nonumber \\
& \quad \left. \left.+ 9 h \Lambda ^2 A_j - 6 i h \Lambda  \dot{A}_j + 6 i \alpha  \xi  \Lambda  A_j^3 + 6 \alpha  \xi  A_j^2 \dot{A}_j - 9 \xi  \Lambda ^2 A_j^3 + 18 i \xi  \Lambda  A_j^2 \dot{A}_j + 6 \xi  A_j \dot{A}_j^2 + 3 \xi  A_j^2 \ddot{A}_j \right) \right. \nonumber \\
& \quad \left.+ \frac{e^{5i\Omega t}}{64} \left( 3 h^2 \xi  A_j^2 \bar{A}_j - 6 h \xi ^2 A_j^4 \bar{A}_j + 3 \xi ^3 A_j^6 \bar{A}_j \right) + \frac{e^{7i\Omega t}}{64} \left( 3 h^2 \xi  A_j^3 - 6 h \xi ^2 A_j^5 + 3 \xi ^3 A_j^7 \right) \right] \nonumber\\
& + \varepsilon^9 \left[ \frac{e^{3i\Omega t}}{512} \left( -3 h^3 \xi  A_j^2 \bar{A}_j+6 h^2 \xi ^2 A_j^4 \bar{A}_j+6 h^2 \xi ^2 A_j^2 \bar{A}_j^3-3 h \xi ^3 A_j^6 \bar{A}_j-6 h \xi ^3 A_j^4 \bar{A}_j^3+3 \xi ^4 A_j^6 \bar{A}_j^3 \right)\right.\nonumber\\
&\quad \left.+ \frac{e^{9i\Omega t}}{512}  \left( - h^3 \xi  A_j^3 + 3 h^2 \xi ^2 A_j^5 - 3 h \xi ^3 A_j^7 + \xi ^4 A_j^9 \right) \right] + \mathrm{c.c.}
\end{aligned}
\label{error}
\end{equation}

Note that the dNLS equation \eqref{dNLS} is obtained from removing the resonant terms at $\mathcal{O}(\varepsilon^3)$. The usual rotating frame ansatz, where one uses only the first term of \eqref{ansatzP} and its complex conjugate, i.e., $u_j(t) \approx \varepsilon\;A_j(\tau) e^{i\Omega t} + \mathrm{c.c.},$ will also yield \eqref{dNLS}, but it leaves a larger residue: %at the same leading order as the approximation \eqref{dNLS}: %at , i.e., $R_j(t)=$\mathcal{O}\left(\varepsilon^3\right)$:
\begin{equation*} %\label{residual1}
\begin{array}{ccl} 
R_j(t)  &:= & \varepsilon^3 (\xi A_j^3 - hA_j) e^{3i\Omega t} + \varepsilon^5 \left(\frac{1}{2} i \hat{\alpha} \Lambda A_j   - \frac{1}{4}\Lambda^2 A_j + \frac{1}{2} \hat{\alpha} \dot{A}_j + \frac{1}{2} i \Lambda \dot{A}_j + \frac{1}{2} \ddot{A}_j \right) e^{i\Omega t} +c.c.
\end{array}
\end{equation*} 
Over a long time, this can yield an error that is of the same order as the approximation itself, which will be problematic for the expansion. Thus, $R_j$ needs to be smaller. It might be made small only if $\xi$ and $h$ were both to be very small. However, this cannot be the case since they have been derived from scaling the original "physical" parameters properly with respect to $\varepsilon$, i.e., their smallness have already been exploited. The only way to reduce the residue is therefore by modifying the rotating ansatz, which following \cite{dorf11} (see Chapter 5) yields \eqref{ansatzP}. 

In the followings, we denote by $A$ the sequence $(A_j)_{j\in\mathbb{Z}}$ in $\ell^2(\mathbb{Z})$, which %The space  $\ell^2(\mathbb{Z})$ 
is a Banach space equipped with norm,
\begin{equation}
\|A\|_{\ell^2(\mathbb{Z})} = \left(\sum_{j\in\mathbb{Z}} \;|A_j|^2 \right)^{1/2}.
\end{equation}
First, we prove the preliminary estimates on the global solutions of the dNLS equation \eqref{dNLS} in $\ell^2(\mathbb{Z})$-space, the leading order approximation (\ref{ansatzP}), and the residual term (\ref{error}).

\begin{lemma}\label{Global}
	For every $A(0) = \varphi \in \ell^2 (\mathbb{Z})$, the dNLS equation \eqref{dNLS} admits a unique global solution $A(\tau)$ on $[0,\infty)$ which belongs to $C^k\left([0,+\infty), \ell^2 (\mathbb{Z}) \right)$. 
%	\begin{equation}
%	A \in C^k\left([0,+\infty), \ell^2 (\mathbb{Z}) \right)\:.
%	\end{equation}
	Furthermore, the unique solution $A(\tau)$ satisfies the estimate  
	\begin{equation}
	\|A(\tau)\|_{\ell^2 (\mathbb{Z})} \leq \| \varphi \|_{\ell^{2}(\mathbb{Z})}\;e^{-(\hat{\alpha} - 2|h|)\tau}\:.
	\end{equation}
\end{lemma}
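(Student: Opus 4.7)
My plan is to decompose the lemma into three standard pieces: (i) local existence and uniqueness via the contraction mapping principle on the Banach space $\ell^2(\mathbb{Z})$, (ii) an $\ell^2$ energy identity that gives both the claimed exponential bound and, as a byproduct, rules out finite-time blow-up, and (iii) a regularity bootstrap for the $C^k$ claim.

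For local well-posedness, I would recast the dNLS equation \eqref{dNLS} as an ODE $\dot A = F(A)$ on $\ell^2(\mathbb{Z})$, where $F$ is the sum of a bounded linear part (the discrete Laplacian has operator norm at most $4$, and $\Lambda$, $\hat\alpha$, and complex conjugation are all bounded on $\ell^2$) plus the cubic nonlinearity $A\mapsto |A|^2A$. The nonlinearity is locally Lipschitz on $\ell^2(\mathbb{Z})$ because of the embedding $\ell^2 \hookrightarrow \ell^\infty$, yielding
\begin{equation*}
\bigl\| |A|^2 A - |B|^2 B\bigr\|_{\ell^2(\mathbb Z)} \le C\bigl(\|A\|_{\ell^2(\mathbb Z)}^2 + \|B\|_{\ell^2(\mathbb Z)}^2\bigr)\|A-B\|_{\ell^2(\mathbb Z)}.
\end{equation*}
Picard--Lindel\"of in $\ell^2(\mathbb{Z})$ then delivers a unique solution $A\in C^1([0,T_{\max}),\ell^2(\mathbb{Z}))$ for some maximal time $T_{\max}\in(0,\infty]$.

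The decisive step is the energy identity. I would multiply \eqref{dNLS} by $\bar A_j$, take real parts (equivalently, compute $\frac{d}{d\tau}|A_j|^2 = 2\operatorname{Re}(\dot A_j \bar A_j)$), and sum over $j\in\mathbb Z$. Summation by parts shows that $\sum_j (\Delta A_j)\bar A_j \in \mathbb R$, reflecting the self-adjointness of the discrete Laplacian, and the term $3\xi|A_j|^2 A_j\bar A_j = 3\xi|A_j|^4$ is also real; both therefore disappear when imaginary parts are extracted, as does the $\Lambda$ contribution. What remains is
\begin{equation*}
\frac{d}{d\tau}\|A(\tau)\|_{\ell^2(\mathbb Z)}^2 = -2\hat\alpha\,\|A(\tau)\|_{\ell^2(\mathbb Z)}^2 - 2h\sum_{j\in\mathbb Z}\operatorname{Im}(A_j^2).
\end{equation*}
Using $|\operatorname{Im}(A_j^2)|\le |A_j|^2$, the last sum is bounded by $\|A\|_{\ell^2(\mathbb Z)}^2$, so Gr\"onwall's inequality immediately yields the exponential decay stated in the lemma. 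Because this a priori estimate prevents $\|A(\tau)\|_{\ell^2(\mathbb Z)}$ from blowing up in finite time, the standard maximality criterion forces $T_{\max}=+\infty$, giving global existence.

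Finally, $C^k$ regularity is a routine bootstrap: $F$ is a polynomial map on $\ell^2(\mathbb Z)$, hence real-analytic, so differentiating the equation once gives $\ddot A = DF(A)\dot A$ with right-hand side in $C^0$, and iterating places $A\in C^k$ for every $k$. No genuine obstacle is expected; the trickiest technicality is justifying the term-by-term differentiation and summation that produce the energy identity, which is legitimate because the sums involved converge absolutely on the $C^1$ solution curve. The only subtle point worth double-checking is the precise numerical constant in the exponent, since the naive bound $|\operatorname{Im}(A_j^2)|\le |A_j|^2$ yields decay rate $\hat\alpha-|h|$, so recovering the factor $\hat\alpha-2|h|$ in the statement should correspond to absorbing a harmless looser estimate.
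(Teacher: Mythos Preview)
Your proposal is correct and follows essentially the same route as the paper: contraction mapping for local existence, an $\ell^2$ energy identity for the a priori bound and global existence, and a bootstrap on the polynomial right-hand side for $C^k$ regularity.

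On the constant you flag at the end: your computation is right and in fact sharper than the paper's. The paper writes the parametric-drive contribution as $4h\sum_j \operatorname{Im}(\bar A_j)\operatorname{Re}(\bar A_j)$ and then bounds $|\operatorname{Im}(\bar A_j)||\operatorname{Re}(\bar A_j)|\le |A_j|^2$, which produces the looser rate $\hat\alpha-2|h|$. Your bound $|\operatorname{Im}(A_j^2)|\le |A_j|^2$ (equivalently $|\operatorname{Re}(\bar A_j)\operatorname{Im}(\bar A_j)|\le \tfrac12|A_j|^2$) gives $\hat\alpha-|h|$, which implies the stated estimate since $e^{-(\hat\alpha-|h|)\tau}\le e^{-(\hat\alpha-2|h|)\tau}$ for $\tau\ge 0$. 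So there is nothing to fix; the lemma as stated follows a fortiori from your argument.
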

\begin{proof}
	We split the proof into four parts.\\
	
	\textbf{1. Local existence.} Let us rewrite Eq.\ \eqref{dNLS} in its equivalent integral form \\
	\begin{equation}\label{Intform}
	A_j(\tau) = \varphi_j - i \int_{0}^{\tau} \left(\Delta A_j - i \alpha A_j + \Lambda A_j - 3 \xi |A_j|^2 A_j  + h \bar{A}_j\right) ds.\,
	\end{equation}
	Define a Banach space,
	\begin{equation}
	\mathcal{B} = \{A \in  C\left([0,\tilde{\tau}], \ell^2 (\mathbb{Z}) \right) | \|A\|_{\ell^2 (\mathbb{Z})} \leq \delta\}\;,
	\end{equation}
	equipped with norm,
	\begin{equation}
	\|A\|_{\mathcal{B}} = \sup_{\tau \in [0,\tilde{\tau}]} \| A (\tau) \|_{\ell^2 (\mathbb{Z})}.
	\end{equation}
	
	For $A \in \ell^2 (\mathbb{Z})$, we define a nonlinear operator
	\begin{equation}\label{operatorK}
	K_j\left[A(\tau)\right] = \varphi_j - i \int_{0}^{\tau} \left(\Delta A_j - i \alpha A_j + \Lambda A_j - 3 \xi |A_j|^2 A_j  + h \bar{A}_j\right) ds.
	\end{equation}
	We want to prove that the operator $K$ is a contraction mapping on $\mathcal{B}$. Because the discrete Laplacian $\Delta$ is a bounded operator on $\ell^2(\mathbb{Z})$, we have 
	\begin{equation}
	\| \Delta A \|_{\ell^2(\mathbb{Z})} \leq C_{\Delta} \| A \|_{\ell^2(\mathbb{Z})}.
	\end{equation}
To be precise, $C_\Delta=4$ because the operator is a self-adjoint and its continuous spectrum lies within the interval $[-4,0]$. 

	Since $\ell^2(\mathbb{Z})$ is an algebra, there is a constant $C > 0$ such that for every $A, B \in \ell^2(\mathbb{Z})$, we have
	\begin{equation}\label{Banach}
	\| A B \|_{\ell^2(\mathbb{Z})} \leq C \| A \|_{\ell^2(\mathbb{Z})} \| B \|_{\ell^2(\mathbb{Z})}.
	\end{equation}
	From Eq.\ \eqref{operatorK} and using the estimate \eqref{Banach}, we obtain the following bound 
	\begin{equation}
	\| K (A)\|_{\mathcal{B}} \leq \delta_0 + \tilde{\tau}(C\delta + \hat{\alpha} \delta + \Lambda \delta+ 3 |\xi|\delta^3 + |h| \delta).
	\end{equation}
	We can pick $\delta_0 < \frac{\delta}{2}$ and $\tilde{\tau} \leq \frac{\delta}{2(C\delta + \hat{\alpha} \delta + \Lambda \delta+ 3 |\xi|\delta^3 + |h| \delta)}$ to conclude that $K : \mathcal{B} \rightarrow \mathcal{B}$.

	Let $A,B \in \mathcal{B}$. Then we have 
	\begin{eqnarray}
	K_j[A(\tau)] - K_j[B(\tau)] & = & -i \int_{0}^{\tau} \left[\Delta(A_j-B_j) - i\hat{\alpha} (A_j-B_j) + \Lambda (A_j-B_j)  \right. \nonumber\\
	& & \qquad \left.- 3 \xi (|A_j|^2A_j - |B_j|^2 B_j) + h (\bar{A}_j-\bar{B}_j)\right] \:ds\:.
	\end{eqnarray}
	Noting that 
	\begin{equation} \label{nonlinearProp}
	\begin{array}{cll}
	|A_j|^2 A_j - |B_j|^2 B_j &=& |A_j|^2 A_j -|A_j|^2 B_j+ |A_j|^2 B_j - |B_j|^2 B_j\\
	&=& |A_j|^2(A_j - B_j) + B_j(|A_j|^2-|B_j|^2)\\
	&=& |A_j|^2(A_j - B_j) + B_j(A_j\bar{A}_j - A_j\bar{B}_j + A_j\bar{B}_j - B_j\bar{B}_j)\\
	&=& |A_j|^2(A_j-B_j) + B_j[A_j(\bar{A}_j - \bar{B}_j) + (A_j - B_j)\bar{B}_j],
	\end{array}
	\end{equation}	
	we obtain that
	\begin{equation}
	\| K(A) - K(B) \|_{\mathcal{B}}\leq \tilde{\tau} \left(C_\Delta + \hat{\alpha} + \Lambda + 3 |\xi| C \delta^2 + |h|\right)\| A-B \|_{\mathcal{B}}.
	\end{equation}
	By taking 
	\begin{equation}
	\tilde{\tau} < \mathrm{min}\left(\frac{1}{C_\Delta + \hat{\alpha} + \Lambda + 3 |\xi| C \delta^2 + |h|}, \frac{\delta}{2(C\delta + \hat{\alpha} \delta + \Lambda \delta+ 3 |\xi|\delta^3 + |h| \delta)}\right) \:,
	\end{equation}
	then $K$ is a contraction mapping on $\mathcal{B}$. Therefore, by Banach fixed point theorem, there exists a unique fixed point of operator $K$, which is a unique solution of \eqref{Intform}. \\
	
	\textbf{2. Smoothness.} From Eq.\ \eqref{dNLS}, we obtain that
	\begin{equation}
	\sup_{\tau \in [0,\tilde{\tau}]} \|\dot{A}\|_{\ell^2(\mathbb{Z})} \leq \left(C_\Delta + \hat{\alpha} + \Lambda + |h| + 3|\xi|\delta^2 \right) \delta\:,
	\end{equation}
	which shows that the solution belongs to $C^1\left([0,\tilde{\tau}), \ell^2 (\mathbb{Z}) \right)$.
	
	Furthermore, by writing $\eta = (A, \bar{A})$, we have
	\begin{equation}\label{etaEq}
	i\frac{d\eta}{d\tau} = L\eta + N(\eta) + F(\eta)\:,
	\end{equation} 
	where
	\begin{equation}
	L\eta = \left[ \begin{matrix}
	\Delta - i\hat{\alpha} + \Lambda \\
	-\Delta - i\hat{\alpha} - \Lambda
	\end{matrix}\right]\eta\:, \qquad N(\eta) = \frac{3\xi}{2} |\eta|^2 \eta\:, \qquad F(\eta) = \left[\begin{matrix}
	0,h\\h,0
	\end{matrix}\right]\eta\:.
	\end{equation}
	Differentiating \eqref{etaEq}, we obtain
	\begin{equation}
	i\frac{d^2\eta}{d\tau^2} = L\frac{d\eta}{d\tau} + DN(\eta)\cdot\frac{d\eta}{d\tau} + DF(\eta)\cdot\frac{d\eta}{d\tau}\:.
	\end{equation}
	Since $N(\eta)$ and $F(\eta)$ are smooth on $\ell^2 (\mathbb{Z})$ and $\frac{d\eta}{d\tau} \in C\left([0,\tilde{\tau}), \ell^2 (\mathbb{Z}) \right)$, then we have
	\begin{equation}
	\frac{d^2\eta}{d\tau^2} \in C\left([0,\tilde{\tau}), \ell^2 (\mathbb{Z}) \right)\:,
	\end{equation}
	which implies that $A \in C^2\left([0,\tilde{\tau}), \ell^2 (\mathbb{Z}) \right)$. Using a similar procedure for higher derivatives, we conclude that 
	\begin{equation}
	A \in C^k\left([0,\tilde{\tau}), \ell^2 (\mathbb{Z}) \right)\:.
	\end{equation}
	\\

	\textbf{3. Maximal solutions.} We can construct a maximal solution by repeating the steps above with the initial condition $A(\tilde{\tau} - \tau_0)$ for some $0<\tau_0 < \tilde{\tau}$ and by using the uniqueness result to glue the solutions.\\
	
	\textbf{4. Global existence.} To prove the global existence, first we multiply the dNLS equation \eqref{dNLS} with $\bar{A}_j$ and use its complex conjugate to obtain
	\begin{equation}\label{dNLS2}
	i\frac{d}{d\tau} \left(A_j \bar{A}_j\right) - \left(\bar{A}_j \Delta A_j - A_j \Delta \bar{A}\right)=  - 2 i\hat{\alpha} |A_j|^2 + h (\bar{A}_j^2 - A_j^2)\:.
	\end{equation}
	Note that
	\begin{equation}
	\sum_{j \in \mathbb{Z}} \left(A_j \Delta\bar{A}_j - \bar{A}_j \Delta A_j\right) = 0.
	\end{equation}
	Summing up \eqref{dNLS2} over $j$, we then get 
	\begin{equation}
	\frac{d}{d\tau} \|A\|^2_{\ell^{2}(\mathbb{Z})} =  - 2 \hat{\alpha} \|A\|^2_{\ell^{2}(\mathbb{Z})}  + 4 \; h \sum_{j \in \mathbb{Z}}\:\mathrm{Im}(\bar{A}_j)\mathrm{Re}(\bar{A}_j).
	\end{equation}
	For the last term in the above equation, we have the estimate
	\begin{equation}
	h \sum_{j \in \mathbb{Z}}\:\mathrm{Im}(\bar{A}_j)\mathrm{Re}(\bar{A}_j) \leq |h| \sum_{j \in \mathbb{Z}}\:\left|\mathrm{Im}(\bar{A}_j)\right| \left|\mathrm{Re}(\bar{A}_j)\right| \leq |h| \|A\|^2_{\ell^{2}_k(\mathbb{Z})}\:,
	\end{equation}
	which leads to
	\begin{equation}
	\dfrac{d}{d\tau} \| A \|^2_{\ell^{2}(\mathbb{Z})} + 2 (\hat{\alpha} - 2h)\| A \|^2_{\ell^{2}(\mathbb{Z})} \leq 0\:.
	\end{equation}
	Integrating the inequality, we get
	\begin{equation} \label{GlobalBoundl2}
	\| A(\tau) \|_{\ell^{2}(\mathbb{Z})} \leq \| \varphi \|_{\ell^{2}(\mathbb{Z})}\;e^{-(\hat{\alpha} - 2|h|)\tau},% \leq C_A\:,
	\end{equation}
	which shows that $A(\tau)$ cannot blow up in finite time. Thus, the dNLS equation \eqref{dNLS} admits global solutions. 
	
\end{proof}

It is worth mentioning that due to the damping term, the dNLS equation \eqref{dNLS} does not possess a constant of motion. However, we can define a Hamiltonian (i.e., an energy function) associated with equation \eqref{dNLS} as
\begin{equation}
\mathrm{H}_{\mathrm{dNLS}}[A](\tau) = \sum_{j \in \mathbb{Z}}\;\left(|\nabla A_j|^2 - \Lambda |A_j|^2 + \frac{3\xi}{2}|A_j|^4 - h \mathrm{Re}(A_j^2)\right),
\end{equation}
with $\nabla A_j = A_{j+1} - A_{j}$. The Hamiltonian function satisfies the differential equation,
\begin{equation}
\frac{d}{d\tau}\mathrm{H}_{\mathrm{dNLS}}[A](\tau) + 2\alpha \mathrm{H}_{\mathrm{dNLS}}[A](\tau) = - 3\alpha\xi \sum_{j \in \mathbb{Z}}\;|A_j(\tau)|^4\:.
\end{equation}
When $\alpha = 0$ (i.e., there is no damping present), $\mathrm{H}_{\mathrm{dNLS}}$ is conserved.

Now we provide estimates for the leading order approximation (\ref{ansatzP}) and the residual terms \eqref{error} in the following lemmas.

\begin{lemma}\label{ConstantX}
	For every $A(0) = \varphi \in \ell^2(\mathbb{Z})$, there exists a $\varepsilon$-independent positive constant $C_\phi$ that depends on $\lVert \varphi\rVert_{\ell^2(\mathbb{Z})}, h, \hat{\alpha}$ and $\tau_0$, such that the leading-order approximation (\ref{ansatzP}) satisfies
	\begin{equation}
	\lVert \phi(t)\rVert_{\ell^2(\mathbb{Z})} + \lVert \dot{\phi}(t) \rVert_{\ell^2(\mathbb{Z})} \leq \varepsilon \;C_\phi,
	\label{tam2}
	\end{equation}
	for all $t \in [0,2\tau_0/\varepsilon^2]$ and $\varepsilon \in (0,1)$.
\end{lemma}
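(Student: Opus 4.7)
The plan is to bound $\phi$ and $\dot\phi$ directly from the ansatz \eqref{ansatzP} by combining the triangle inequality, the Banach-algebra estimate \eqref{Banach}, and the global bound from Lemma \ref{Global}. Because $\tau=\varepsilon^2 t/2$, the fast-time window $t\in[0,2\tau_0/\varepsilon^2]$ corresponds to the slow-time window $\tau\in[0,\tau_0]$, and on this interval Lemma \ref{Global} already furnishes the $\varepsilon$-independent bound
$$
\|A(\tau)\|_{\ell^2(\mathbb{Z})}\leq \|\varphi\|_{\ell^2(\mathbb{Z})}\max\!\bigl(1,e^{(2|h|-\hat\alpha)\tau_0}\bigr)=:C_0,
$$
valid regardless of the sign of $\hat\alpha-2|h|$, and depending only on $\|\varphi\|_{\ell^2(\mathbb{Z})},h,\hat\alpha$ and $\tau_0$.

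First I would estimate $\|\phi(t)\|_{\ell^2(\mathbb{Z})}$ by applying the triangle inequality to \eqref{ansatzP} and using \eqref{Banach} on the cubic correction to obtain
$$
\|\phi(t)\|_{\ell^2(\mathbb{Z})}\leq 2\varepsilon\,C_0+\frac{\varepsilon^3}{4}\bigl(|\xi|C^2C_0^{\,3}+|h|C_0\bigr).
$$
Since $\varepsilon\in(0,1)$ gives $\varepsilon^3\leq\varepsilon$, the right-hand side is already bounded by $\varepsilon$ times an $\varepsilon$-independent constant of the correct form.

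Next I would handle $\|\dot\phi(t)\|_{\ell^2(\mathbb{Z})}$ by differentiating \eqref{ansatzP} in $t$, using $\partial_t e^{ik\Omega t}=ik\Omega\,e^{ik\Omega t}$ on the carriers and $\partial_t=(\varepsilon^2/2)\partial_\tau$ on the envelopes. This produces groups of terms of orders $\varepsilon,\varepsilon^3$ and $\varepsilon^5$ involving either $A$ or $\dot A$. The former are dominated by $C_0$, while for $\dot A$ I would read an $L^\infty$-in-$\tau$ bound directly from the dNLS equation \eqref{dNLS}, namely
$$
\|\dot A(\tau)\|_{\ell^2(\mathbb{Z})}\leq (C_\Delta+\hat\alpha+|\Lambda|+|h|)\,C_0+3|\xi|C^2C_0^{\,3}=:C_1,
$$
which is again $\varepsilon$-independent. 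Since $|\Omega|\leq 1+|\Lambda|/2$ for $\varepsilon\in(0,1)$, every contribution to $\|\dot\phi\|_{\ell^2(\mathbb{Z})}$ is at most $\varepsilon$ times an $\varepsilon$-independent constant, so that $\|\dot\phi(t)\|_{\ell^2(\mathbb{Z})}\leq \varepsilon\,C_{\dot\phi}$.

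The argument presents no real analytic obstacle; the main point is careful accounting of the $\varepsilon$-powers so that the highest-order contributions (up to $\varepsilon^5$) are absorbed into a single factor of $\varepsilon$, together with the observation that the exponential prefactor emerging from Lemma \ref{Global} remains $\varepsilon$-independent on the \emph{fixed} slow-time interval $[0,\tau_0]$. Setting $C_\phi$ equal to the sum of the two coefficients obtained above then yields the claimed estimate \eqref{tam2}.
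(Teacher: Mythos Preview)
Your proposal is correct and follows essentially the same route as the paper: both arguments estimate $\phi$ and $\dot\phi$ directly from the ansatz via the triangle inequality, the Banach-algebra property, and the $\ell^2$ bound on $A$ from Lemma~\ref{Global}, with the $\dot\phi$ estimate requiring a bound on $\dot A$ (which the paper cites as $A\in C^1$ and you obtain explicitly from \eqref{dNLS}). Your treatment is in fact slightly more explicit than the paper's, in that you spell out the uniform bound $C_0$ on $[0,\tau_0]$, the direct estimate $C_1$ for $\|\dot A\|_{\ell^2}$, and the control $|\Omega|\le 1+|\Lambda|/2$.
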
	

\begin{proof}
	From the global existence in Lemma \ref{Global} and using the Banach algebra property of $\ell^2(\mathbb{Z})$, we obtain 
	\begin{equation}\label{Phi}
	\begin{array}{ccl}
	\left\lVert \phi(t)\right \rVert_{\ell^2(\mathbb{Z})} & = &\left\lVert \varepsilon \left(A_j e^{i \Omega t} + \bar{A}_j e^{-i \Omega t}\right) + \frac{\varepsilon^3}{8} \left[\left(\xi A_j^3 - h A_j\right) e^{3 i\Omega t} \right.\right.\\
	&& \left.\left. \quad+ \left(\xi \bar{A}_j^3 - h \bar{A}_j\right)e^{-3 i\Omega t}\right]\right\rVert_{\ell^2(\mathbb{Z})}\\
	&\leq& \left\lVert\varepsilon \left(A_j e^{i \Omega t} + \bar{A}_j e^{-i \Omega t}\right) \right\rVert_{\ell^2(\mathbb{Z})} + \left\lVert \frac{\varepsilon^3}{8}\left(\xi A_j^3 e^{3 i\Omega t} -h A_j e^{3 i\Omega t}\right)\right\rVert_{\ell^2(\mathbb{Z})} \\
	&& \quad + \left\lVert \frac{\varepsilon^3}{8}\left( \xi \bar{A}_j^3 e^{-3 i\Omega t} - h \bar{A}_j e^{-3 i\Omega t}\right)\right\rVert_{\ell^2(\mathbb{Z})}\\
	& \leq& \varepsilon \left(2 \left\lVert A\right\rVert_{\ell^2(\mathbb{Z})} + \frac{1}{4}|\xi| \varepsilon^2 \left\lVert A^3\right\rVert_{\ell^2(\mathbb{Z})} + \frac{1}{4} |h| \varepsilon^2 \left\lVert A\right\rVert_{\ell^2(\mathbb{Z})} \right)  \\
	& \leq& \varepsilon \;C_{\phi_1}
	\end{array}
	\end{equation}
	and 
	\begin{equation}\label{Pdot}
	\begin{array}{ccl}
	\left\lVert\dot{\phi}(t)\right\rVert_{\ell^2(\mathbb{Z})} & = & \left\lVert\dfrac{3}{16} i h \Lambda  \varepsilon^5 e^{-3 i\Omega t} \bar{A}_j - \dfrac{1}{16} h \varepsilon^5 e^{-3 i\Omega t} \dot{\bar{A}}_j + \dfrac{3}{8} i h \varepsilon^3 e^{-3 i\Omega t} \bar{A}_j  \right. \\
	&& \quad \left. - \dfrac{3}{16} i \xi  \Lambda  \varepsilon ^5 e^{-3i\Omega t} \bar{A}_j^3 +\dfrac{3}{16} \xi  \varepsilon ^5 e^{-3 i\Omega t} \bar{A}_j^2 \dot{\bar{A}}_j - \dfrac{3}{8} i \xi  \varepsilon^3 e^{-3i\Omega t} \bar{A}_j^3  \right.\\
	&& \quad \left. - \dfrac{1}{2} i \Lambda  \varepsilon^3 e^{-i\Omega t} \bar{A}_j + \dfrac{1}{2} \varepsilon^3 e^{-i\Omega t} \dot{\bar{A}}_j -i \varepsilon  e^{-i\Omega t} \bar{A}_j \right. \\
	&& \quad \left. - \dfrac{3}{16} i h \Lambda  \varepsilon ^5 e^{3i\Omega t} A_j - \dfrac{1}{16} h \varepsilon^5 e^{3 i\Omega t} \dot{A}_j - \dfrac{3}{8} i h \varepsilon^3 e^{3i\Omega t} A_j  \right.\\
	&& \quad \left. +\dfrac{3}{16} i \xi \Lambda \varepsilon^5 e^{3i\Omega t} A_j^3 + \dfrac{3}{16} \xi  \varepsilon^5 e^{3 i \Omega t} A_j^2 \dot{A}_j + \dfrac{3}{8} i \xi  \varepsilon^3 e^{3i\Omega t} A_j^3  \right.\\
	&& \quad \left. + \dfrac{1}{2} i \Lambda  \varepsilon^3 e^{i\Omega t} A_j +\dfrac{1}{2} \varepsilon^3 e^{i\Omega t} \dot{A}_j+i \varepsilon  e^{i \Omega t} A_j \right\rVert_{\ell^2(\mathbb{Z})}.
	\end{array} 
	\end{equation}
	Since $A \in C^1\left([0,+\infty), \ell^2 (\mathbb{Z}) \right)$, then we have
	\begin{equation}\label{Pdot2}
	\left\lVert\dot{\phi}(t)\right\rVert_{\ell^2(\mathbb{Z})} \leq \varepsilon\;C_{\phi_2}\:.
	\end{equation}
	From Eqs.\ \eqref{Phi} and \eqref{Pdot2}, we obtain the inequality \eqref{tam2}, which concludes the proof.
%	\begin{equation}	\lVert \phi(t)\rVert_{\ell^2(\mathbb{Z})} + 	\lVert \dot{\phi}(t) \rVert_{\ell^2(\mathbb{Z})}   \leq C_\phi\:.	\end{equation} 
\end{proof}

\begin{lemma}\label{ConstantR}
	For every $A(0) = \varphi \in \ell^2(\mathbb{Z})$, there exists a positive $\varepsilon-$independent constant $\tilde{C_R}$ that depends on $\lVert A_0\rVert_{\ell^2}, h, \hat{\alpha}$ and $\tau_0$, such that for every $\varepsilon \in (0,1)$ and every $ t \in [0,2\tau_0/\varepsilon^2]$, the residual terms in (\ref{error}) is estimated by
	\begin{equation}\label{Residu}
	\lVert R(t) \rVert_{\ell^2(\mathbb{Z})} \leq \tilde{C_R} \varepsilon^5.
	\end{equation}
\end{lemma}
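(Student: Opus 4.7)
The plan is to exploit the fact that every term appearing in the residual \eqref{error} is explicitly written as a monomial in $A_j,\bar A_j,\dot A_j,\ddot A_j$ (and shifts $A_{j\pm 1}$) multiplied by a pure power $\varepsilon^5$, $\varepsilon^7$, or $\varepsilon^9$. Since the hypothesis restricts us to $\varepsilon\in(0,1)$, we have $\varepsilon^7\leq\varepsilon^5$ and $\varepsilon^9\leq\varepsilon^5$, so after factoring out $\varepsilon^5$ it suffices to prove that the remaining bracketed expression has an $\ell^2(\mathbb{Z})$-norm that is uniformly bounded on the slow-time interval $\tau\in[0,\tau_0]$ corresponding to $t\in[0,2\tau_0/\varepsilon^2]$. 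The scheme is therefore: apply the triangle inequality term by term, use the Banach algebra property \eqref{Banach} of $\ell^2(\mathbb{Z})$ to handle all the multilinear monomials, and then feed in \emph{a priori} $\ell^2$-bounds on $A$, $\dot A$, and $\ddot A$ provided by Lemma~\ref{Global}.

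First I would record the uniform-in-$\tau$ bounds we already own. From \eqref{GlobalBoundl2} in Lemma~\ref{Global} we have
\begin{equation*}
\sup_{\tau\in[0,\tau_0]}\|A(\tau)\|_{\ell^2(\mathbb{Z})}\ \leq\ \|\varphi\|_{\ell^2(\mathbb{Z})}\,e^{\,|\hat\alpha-2|h||\,\tau_0}\ =:\ M_0,
\end{equation*}
so $M_0$ depends only on $\|\varphi\|_{\ell^2}$, $\hat\alpha$, $h$, and $\tau_0$, exactly as required by the statement. Using the dNLS equation \eqref{dNLS} together with the Banach-algebra inequality one then obtains
\begin{equation*}
\|\dot A(\tau)\|_{\ell^2(\mathbb{Z})}\ \leq\ \bigl(C_\Delta+\hat\alpha+|\Lambda|+|h|+3|\xi|CM_0^2\bigr)M_0\ =:\ M_1,
\end{equation*}
and differentiating \eqref{dNLS} once in $\tau$ (as is already done in the ``Smoothness'' part of the proof of Lemma~\ref{Global}) gives a comparable linear bound
\begin{equation*}
\|\ddot A(\tau)\|_{\ell^2(\mathbb{Z})}\ \leq\ \bigl(C_\Delta+\hat\alpha+|\Lambda|+|h|+6|\xi|CM_0^2\bigr)M_1\ =:\ M_2,
\end{equation*}
all uniform on $\tau\in[0,\tau_0]$.

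Next I would walk through \eqref{error} line by line. The time-dependent prefactors $e^{ik\Omega t}$ have modulus one and thus contribute $1$ to any $\ell^2$-estimate. Each monomial of the form $A_j^{p}\bar A_j^{q}(\dot A_j)^{r}(\ddot A_j)^{s}$ is, by iterated application of \eqref{Banach}, controlled in $\ell^2$ by $C^{p+q+r+s-1}M_0^{p+q}M_1^{r}M_2^{s}$; the terms involving $A_{j\pm 1}^{m}$ are handled identically, since translation is an isometry on $\ell^2(\mathbb{Z})$. Summing all the resulting constants over the finitely many monomials at order $\varepsilon^5$ (respectively $\varepsilon^7$, $\varepsilon^9$), and absorbing the extra $\varepsilon^2$ and $\varepsilon^4$ into $\varepsilon^5$ via $\varepsilon<1$, yields a single $\varepsilon$-independent constant $\tilde C_R$ depending only on $\|\varphi\|_{\ell^2}$, $h$, $\hat\alpha$, $\xi$, $\Lambda$, and $\tau_0$, such that $\|R(t)\|_{\ell^2(\mathbb{Z})}\leq \tilde C_R\,\varepsilon^5$ on the stated time interval.

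The main difficulty in this proof is not conceptual but organizational: the residual contains on the order of forty distinct monomials spread across the three $\varepsilon$-orders, and one must be careful (i) to treat terms involving $\dot A$ and $\ddot A$ separately from the purely polynomial ones so that the correct constants $M_1,M_2$ appear, and (ii) to verify that the exponential factor in the bound of Lemma~\ref{Global} does not spoil the uniform constant; the latter is guaranteed precisely because $\tau$ is restricted to the compact interval $[0,\tau_0]$, which is where the scaling $\tau=\varepsilon^2 t/2$ is essential. Apart from this bookkeeping, the estimate reduces entirely to the triangle inequality and the algebra property already used in Lemma~\ref{Global}.
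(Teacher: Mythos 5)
Your proposal is correct and follows essentially the same route as the paper, whose own proof is only a one-sentence sketch invoking Lemma~\ref{Global} and the Banach algebra property of $\ell^2(\mathbb{Z})$; you have simply filled in the details (factoring out $\varepsilon^5$ using $\varepsilon<1$, the uniform bounds $M_0,M_1,M_2$ on $A,\dot A,\ddot A$ over $\tau\in[0,\tau_0]$, and the term-by-term triangle inequality). The only quibble is a minor bookkeeping constant in your $\ddot A$ estimate, which does not affect the argument.
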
	

\begin{proof}
	To prove this lemma, we can use the result from Lemma \ref{Global} as well as the property of Banach algebra in $\ell^2(\mathbb{Z})$, such that from the global existence and smoothness of the solution $A(\tau)$ of the discrete nonlinear Schr\"odinger equation (\ref{dNLS}) in Lemma \ref{Global}, we obtain the result (\ref{Residu}).
\end{proof}

\section{Main Results}
\label{sec3}

We are now ready to formulate the main result of the paper that is stated in the following theorem: 
\begin{theorem}\label{theorem1}
	Let $u = (u_j)_{j\in \mathbb{Z}}$ be a solution of the dNLS equation \eqref{Parametric} and let $\phi =  (\phi_j)_{j\in \mathbb{Z}}$ be the leading approximation terms given by \eqref{ansatzP}. For every $\tau_0 > 0$, there are a small $\varepsilon_0 > 0$ and positive constants $C_0$ and $C$ such that for every $\varepsilon \in (0,\varepsilon_0)$ with 
	\begin{equation}\label{data1}
	\lVert u(0) - \phi(0) \rVert_{\ell^2(\mathbb{Z})} + \lVert \dot{u}(0) - \dot{\phi}(0)  \rVert_{\ell^2(\mathbb{Z})} \leq C_0 \varepsilon^3 \:,
	\end{equation}
	the inequality
	\begin{equation}\label{data2}
	\lVert u(t) - \phi(t) \rVert_{\ell^2(\mathbb{Z})} + \lVert \dot{u}(t) - \dot{\phi}(t)  \rVert_{\ell^2(\mathbb{Z})} \leq C \varepsilon^3 \:,
	\end{equation}
	holds for $t \in [0, 2\tau_0\varepsilon^{-2}]$.
\end{theorem}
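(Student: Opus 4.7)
The plan is to write $u_j(t) = \phi_j(t) + \varepsilon^3 E_j(t)$, so that an $\ell^2$ bound on $E$ and $\dot E$ of size $O(1)$ transfers directly to the $O(\varepsilon^3)$ bound on $u-\phi$ demanded by \eqref{data2}, and then to close a bootstrap argument built around a functional that is conserved by the linear part of the $E$-equation. Substituting the ansatz into \eqref{Parametric}, using that $\phi$ produces the residual of Lemma \ref{ConstantR}, and expanding $(\phi_j+\varepsilon^3 E_j)^3$, the error satisfies
\begin{equation*}
\ddot E_j + E_j - \varepsilon^2 \Delta E_j + \alpha \dot E_j = \bigl[H\cos(2\Omega t) - 3\xi\phi_j^2\bigr] E_j - 3\xi\varepsilon^3 \phi_j E_j^2 - \xi\varepsilon^6 E_j^3 - \varepsilon^{-3} R_j(t).
\end{equation*}
With the scalings $\alpha = \varepsilon^2\hat\alpha$, $H = 2\varepsilon^2 h$ and the embedding $\|\phi\|_{\ell^\infty} \leq \|\phi\|_{\ell^2} \leq \varepsilon C_\phi$ from Lemma \ref{ConstantX}, every coupling term that is linear in $E$ carries an explicit factor $O(\varepsilon^2)$, while Lemma \ref{ConstantR} gives $\|\varepsilon^{-3}R(t)\|_{\ell^2} \leq \widetilde C_R \varepsilon^2$ uniformly on $[0,2\tau_0/\varepsilon^2]$. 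The initial bound \eqref{data1} translates to $\|E(0)\|_{\ell^2}+\|\dot E(0)\|_{\ell^2} \leq C_0$.

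Next I would introduce the energy functional
\begin{equation*}
\mathcal E(t) = \tfrac{1}{2}\|\dot E(t)\|_{\ell^2}^2 + \tfrac{1}{2}\|E(t)\|_{\ell^2}^2 + \tfrac{\varepsilon^2}{2}\|\nabla E(t)\|_{\ell^2}^2,
\end{equation*}
which is exactly conserved by the unperturbed linear flow $\ddot E + E - \varepsilon^2 \Delta E = 0$ because $\Delta$ is self-adjoint with $-\langle E,\Delta E\rangle_{\ell^2} = \|\nabla E\|_{\ell^2}^2$. Differentiating along the error equation, the Hamiltonian contributions cancel and only perturbative terms remain, giving
\begin{equation*}
\dot{\mathcal E}(t) = -\alpha\|\dot E\|_{\ell^2}^2 + \bigl\langle \dot E,\; [H\cos(2\Omega t) - 3\xi\phi^2]E - 3\xi\varepsilon^3 \phi E^2 - \xi\varepsilon^6 E^3 - \varepsilon^{-3}R \bigr\rangle_{\ell^2}.
\end{equation*}
Cauchy--Schwarz, the Banach-algebra bound $\|fg\|_{\ell^2} \leq \|f\|_{\ell^\infty}\|g\|_{\ell^2}$, the embedding $\|E\|_{\ell^\infty}\leq\|E\|_{\ell^2}$, and Lemmas \ref{ConstantX}--\ref{ConstantR} then produce the differential inequality
\begin{equation*}
\dot{\mathcal E}(t) \leq K_1 \varepsilon^2 \mathcal E(t) + K_2 \varepsilon^2 \sqrt{\mathcal E(t)} + K_3\varepsilon^4 \mathcal E(t)^{3/2} + K_4\varepsilon^6 \mathcal E(t)^2,
\end{equation*}
with positive constants $K_i$ depending only on $\xi, h, \hat\alpha, \tau_0$ and $\|\varphi\|_{\ell^2}$.

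The final step is a standard continuity/bootstrap argument. Fix $M$ larger than $(\mathcal E(0)+1) e^{2K\tau_0} - 1$ with $K := 2K_1 + K_2 + 1$, and let $T^*$ be the supremum of times $T \in [0,2\tau_0\varepsilon^{-2}]$ on which $\mathcal E(t) \leq M$ for every $t \in [0,T]$; continuity and $\mathcal E(0) \leq 3C_0^2$ ensure $T^* > 0$. On $[0,T^*)$ the superlinear terms obey $K_3\varepsilon^4 \mathcal E^{3/2} + K_4 \varepsilon^6 \mathcal E^2 \leq (K_3\varepsilon^2 \sqrt M + K_4 \varepsilon^4 M)\,\varepsilon^2 \mathcal E$, so for $\varepsilon \leq \varepsilon_0(M)$ small enough I obtain
\begin{equation*}
\dot{\mathcal E}(t) \leq K\varepsilon^2 \bigl(\mathcal E(t)+1\bigr),\qquad \mathcal E(t) \leq (\mathcal E(0)+1)e^{K\varepsilon^2 t} - 1 \leq (\mathcal E(0)+1)e^{2K\tau_0}-1 < M,
\end{equation*}
throughout $[0,T^*)$. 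Continuity then forces $T^* = 2\tau_0\varepsilon^{-2}$, and unpacking $u-\phi = \varepsilon^3 E$, $\dot u - \dot\phi = \varepsilon^3 \dot E$ delivers \eqref{data2} with $C = 2\sqrt M$.

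The main obstacle I anticipate is the ``dangerous'' linear coupling $-3\xi\phi^2 E$ produced by the cubic nonlinearity: it does not commute with $\Delta$ and so cannot be diagonalised together with the linear part, and if $\phi$ were $O(1)$ it would generate parametric growth strong enough to destroy any $t\sim\varepsilon^{-2}$ estimate. What rescues the argument is precisely the $O(\varepsilon)$ scaling of the ansatz, which---together with the $O(\varepsilon^2)$ damping and drive---confines the Gronwall rate to $O(\varepsilon^2)$ and so exactly matches the $\varepsilon^{-2}$ timescale on which \eqref{dNLS} is meant to govern \eqref{Parametric}. The technical care will be in checking that the superlinear $O(\varepsilon^4)$ and $O(\varepsilon^6)$ contributions can genuinely be absorbed under the bootstrap bound without making $\varepsilon_0$ and $M$ circularly dependent.
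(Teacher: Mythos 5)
Your proposal is correct and follows essentially the same route as the paper: both decompose $u=\phi+(\text{error})$, introduce the discrete Klein--Gordon energy $\tfrac12\|\dot y\|^2+\tfrac12\|y\|^2+\tfrac{\varepsilon^2}{2}\|\nabla y\|^2$ (your explicit $\varepsilon^2\|\nabla E\|^2$ term is the same quadratic form the paper writes as $-2\varepsilon^2(y_jy_{j+1}-y_j^2)$), feed in Lemmas \ref{ConstantX}--\ref{ConstantR} to get a Gronwall-type differential inequality with rate $O(\varepsilon^2)$, and close a continuation/bootstrap argument on $[0,2\tau_0\varepsilon^{-2}]$. The only difference is cosmetic: you rescale the error as $u-\phi=\varepsilon^3E$ so the bootstrap bound is $O(1)$, whereas the paper keeps $y=u-\phi$ and tracks $Q=\sqrt{E}\leq C_Q\varepsilon^3$ directly.
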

\begin{proof}
%In this section, we will develop the main result by writing 
Write
\begin{eqnarray}
u_j(t)= \phi_j(t) + y_j(t),\label{dec} 
\end{eqnarray}where $\phi_j(t)$ is the leading-order approximation (\ref{ansatzP}) and $y_j(t)$ is the error term. The error will give us a description of how good $\phi_j(t)$  is as an approximation to solutions of the dKG equation.

Plugging the decomposition \eqref{dec} into equation \eqref{Parametric}, we obtain the evolution problem for the error term as
\begin{equation}\label{EqError}
\ddot{y}_j + y_j + \xi \left(y_j^3 + 3 \phi_j^2 y_j + 3 \phi_j y_j^2\right) - \varepsilon^2 \Delta y_j + \epsilon^2 \hat{\alpha} \dot{y}_j - 2\varepsilon^2 h \cos(2\Omega t)\; y_j + R_j(t) = 0\:.
\end{equation}
Associated with equation \eqref{EqError}, we can define the energy of the error term as
\begin{equation}\label{energy}
	E(t) := \frac{1}{2} \sum_{j \in \mathbb{Z}} \left[\dot{y}_j^2 + y_j^2 - 2\varepsilon \left(y_j y_{j+1} - y_j^2\right)\right].
\end{equation}
Note that from the Cauchy-Schwartz inequality, we have
\begin{equation}
	\sum_{j \in \mathbb{Z}} y_j y_{j+1} \leq  \left(\sum_{j \in \mathbb{Z}} y_j^2\right)^{1/2}\left(\sum_{j \in \mathbb{Z}} y_{j+1}^2\right)^{1/2}= \lVert y \rVert_{\ell^2(\mathbb{Z})}^2 ,
\end{equation}
and %then we have,
\begin{equation}
 -2 \varepsilon \sum_{j \in \mathbb{Z}} y_j y_{j+1} + 2 \varepsilon \sum_{j \in \mathbb{Z}} y_j^2 \geq -2 \varepsilon \lVert y \rVert_{\ell^2(\mathbb{Z})}^2 + 2 \varepsilon \lVert y \rVert_{\ell^2(\mathbb{Z})}^2 = 0\:.
\end{equation}
Thus the energy is always positive for all $t$ on which the solution $y(t)$ is defined. We also have the inequality
\begin{equation}\label{normsol}
	\lVert \dot{y}(t) \rVert_{\ell^2(\mathbb{Z})}^2 + \lVert y(t) \rVert_{\ell^2(\mathbb{Z})}^2 \leq 2 E(t).
\end{equation}

From the energy \eqref{energy} and the error term \eqref{EqError}, we obtain that
\begin{eqnarray}\label{rate}
\dfrac{dE}{dt} &=& \dfrac{1}{2}\sum_{j \in \mathbb{Z}} \left[ 2 \ddot{y}_j \dot{y}_j + 2 \dot{y}_j y_j - 2 \varepsilon^2 \left(\dot{y}_j y_{j+1} + y_j \dot{y}_{j+1} - 2 y_j \dot{y}_j\right)\right] \nonumber \\
&=& \sum_{j \in \mathbb{Z}} \left[\ddot{y}_j + y_j - \varepsilon^2 (y_{j+1} + y_j - 2y_j)\right]\dot{y}_j \nonumber\\
&=& - \sum_{j \in \mathbb{Z}} \left[R_j(t) + \xi \left(y_j^3 + 3 \phi_j^2 y_j + 3 \phi_j y_j^2\right) + \epsilon^2 \hat{\alpha} \dot{y}_j - 2 \varepsilon^2 h \cos(2\Omega t)\;y_j\right] \dot{y}_j\:.
\end{eqnarray}
Setting $E = Q^2$ and using the Cauchy-Schwarz inequality, we have
\begin{eqnarray}\label{equivalent}
		\left|\frac{dQ}{dt}\right| &\leq& \frac{1}{\sqrt{2}}\lVert R(t)\rVert_{\ell^2(\mathbb{Z})} + \left[ |\xi| \left(2 Q^3 + 3 \lVert \phi \rVert_{\ell^2(\mathbb{Z})}^2 Q + 3\sqrt{2} \lVert \phi \rVert_{\ell^2(\mathbb{Z})} Q^2\right) +  \frac{\varepsilon^2\hat{\alpha}}{2} Q \right. \nonumber\\
		&& \qquad \left. + 2\varepsilon^2|h|Q^2 \right].
\end{eqnarray}

Take $\tau_0 > 0$ arbitrarily. Assume that the initial norm of the perturbation term satisfies the  bound
\begin{equation}\label{initialQ}
Q(0) \leq C_0 \varepsilon^3,
\end{equation}
where $C_0$ is a positive constant. Define 
\begin{eqnarray}\label{T0}
T_0 & = &\sup \left \{t_0 \in [0,2\tau_0\varepsilon^{-2}] : \sup_{t \in [0,t_0]} Q(t) \leq C_Q \varepsilon^3 \right \},\, C_R = \sup_{\tau \in [0,\tau_0]} \; \tilde{C_R},
\end{eqnarray}
on the time scale $[0, 2\tau_0\varepsilon^{-2}]$. Then, we can rewrite the energy estimate (\ref{equivalent}) by applying Lemmas \ref{ConstantX}--\ref{ConstantR} and the definition (\ref{T0}) as
\begin{equation}\label{estimateE}
		\left|\frac{dQ}{dt}\right| \leq \frac{1}{\sqrt{2}} C_R \varepsilon^5 + \left(4 |\xi| C_Q^2 \varepsilon^4 + 6 |\xi| C_\phi^2 + 6|\xi| \sqrt{2}  C_\phi C_Q \varepsilon^2 + \hat{\alpha} + 4 |h| \right) \frac{\varepsilon^2 Q}{2}\:.
\end{equation}
Thus, for every $t \in [0,T_0]$ and suffeciently small $\varepsilon > 0$, we can find a positive constant $K_0$, which is independent of $\varepsilon$, such that 
\begin{equation}
	4 |\xi| C_Q^2 \varepsilon^4 + 6 |\xi| C_\phi^2 + 6|\xi| \sqrt{2}  C_\phi C_Q \varepsilon^2 + \hat{\alpha} + 4 |h| \leq K_0 .
\end{equation}
By simplifying and integrating (\ref{estimateE}), we get
\begin{equation}
	Q(t) e^{-\frac{\varepsilon^2 K_{0} t}{2}} - Q(0) \leq \int_{0}^{t} \frac{C_R \varepsilon^5}{\sqrt{2}}  e^{-\frac{\varepsilon^2 K_0 s}{2}} ds \leq \frac{\sqrt{2}C_R \varepsilon^3}{K_0}\;.
\end{equation}
By using (\ref{initialQ}), then we obtain
\begin{equation}
	Q(t) \leq \varepsilon^3 \left(C_0 + \frac{\sqrt{2}C_R }{K_0}\right) e^{K_0 \tau_0}.
\end{equation}
Therefore, we can define $C_Q:= \left(C_0 + 2^{1/2} K_0^{-1} C_R\right) e^{K_0 \tau_0}$ and this %Based on the above analysis, 
concludes the proof.
\end{proof}

\section{Numerical simulations} 
\label{sec4}

\begin{figure}[tbhp!]
	\centering
	\subfloat[]{\includegraphics[scale=0.40]{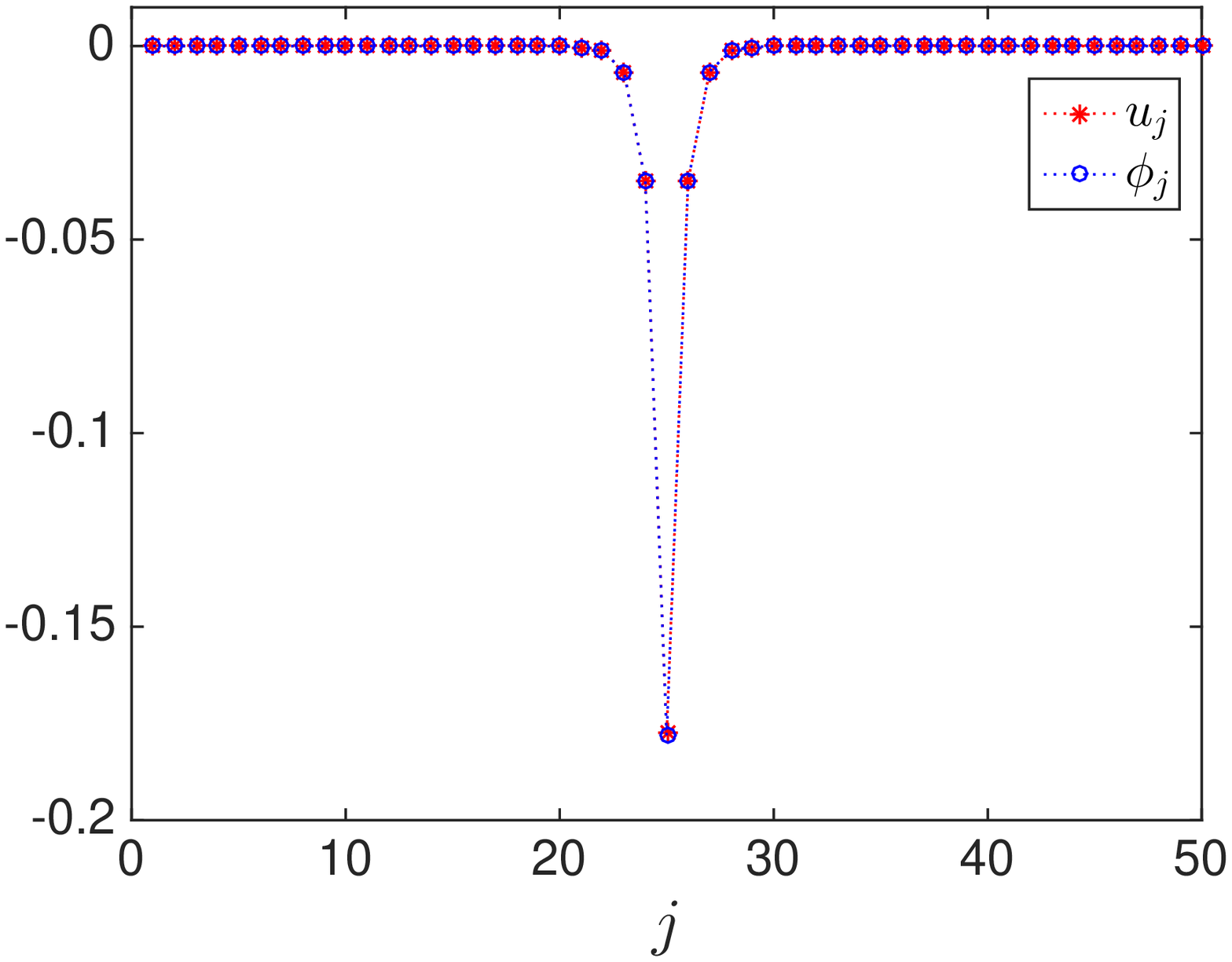}\label{fig1a}}
	\subfloat[]{\includegraphics[scale=0.40]{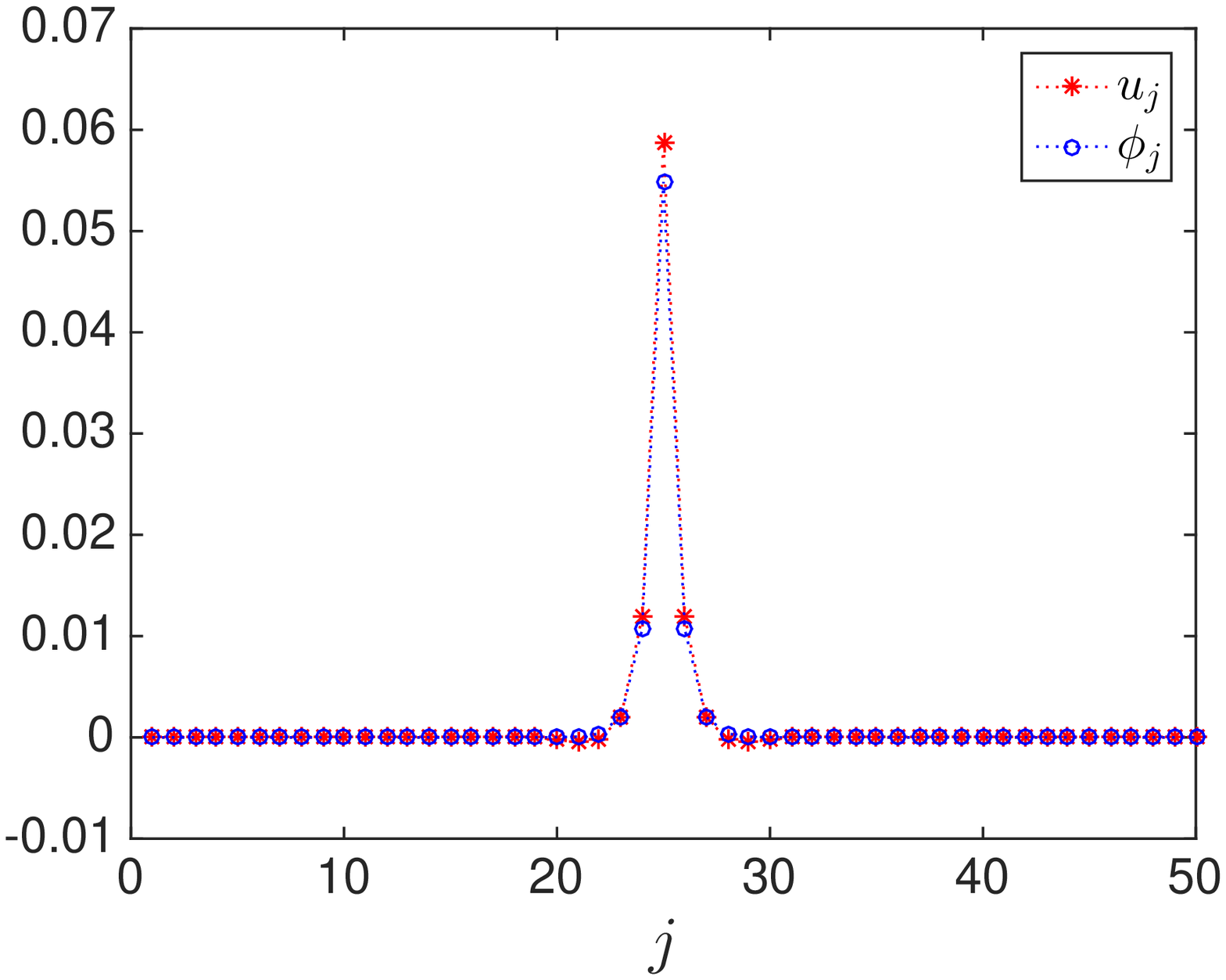}\label{fig1b}}\\
	\subfloat[]{\includegraphics[scale=0.40]{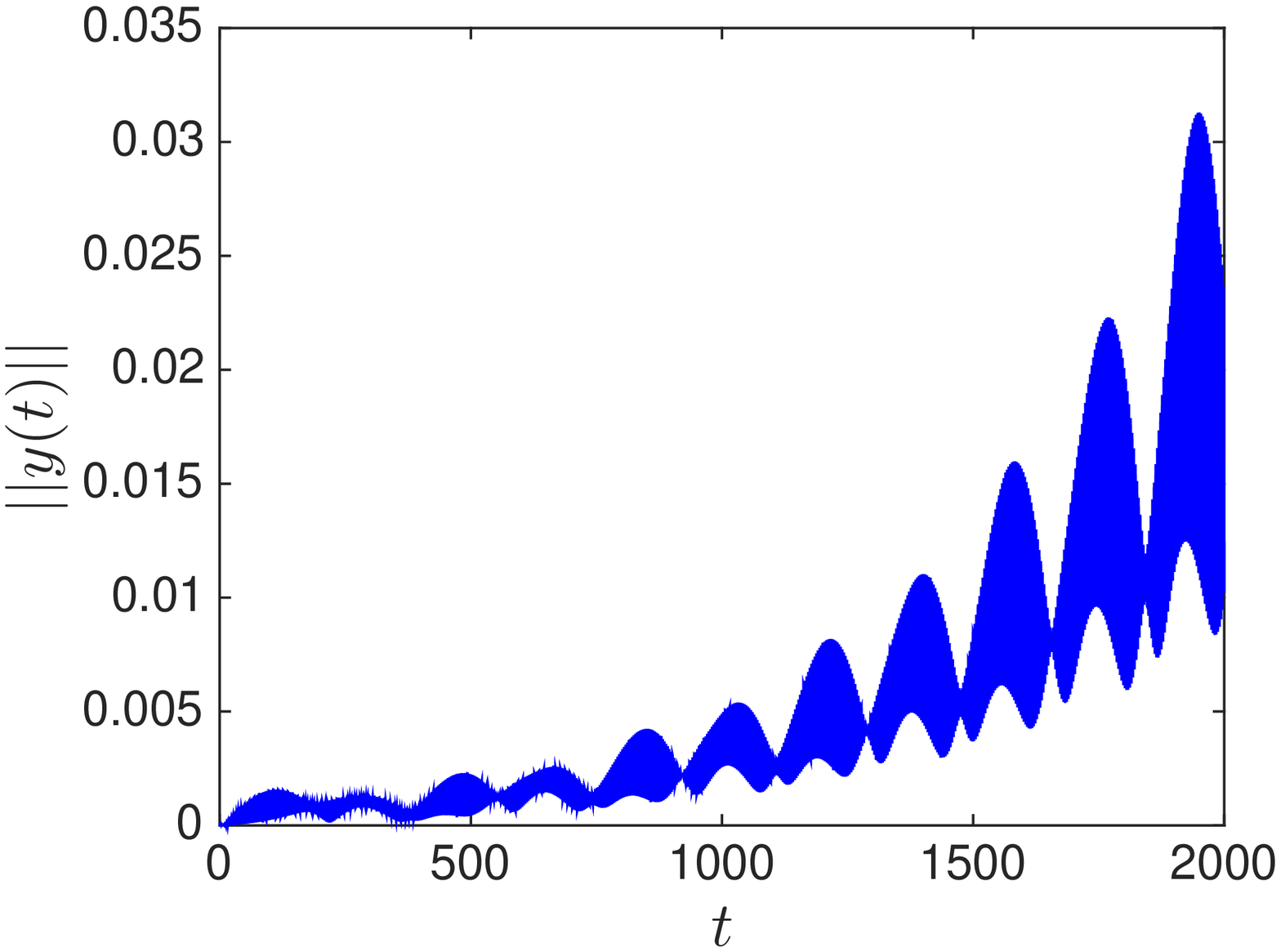}\label{fig1c}}
	\subfloat[]{\includegraphics[scale=0.40]{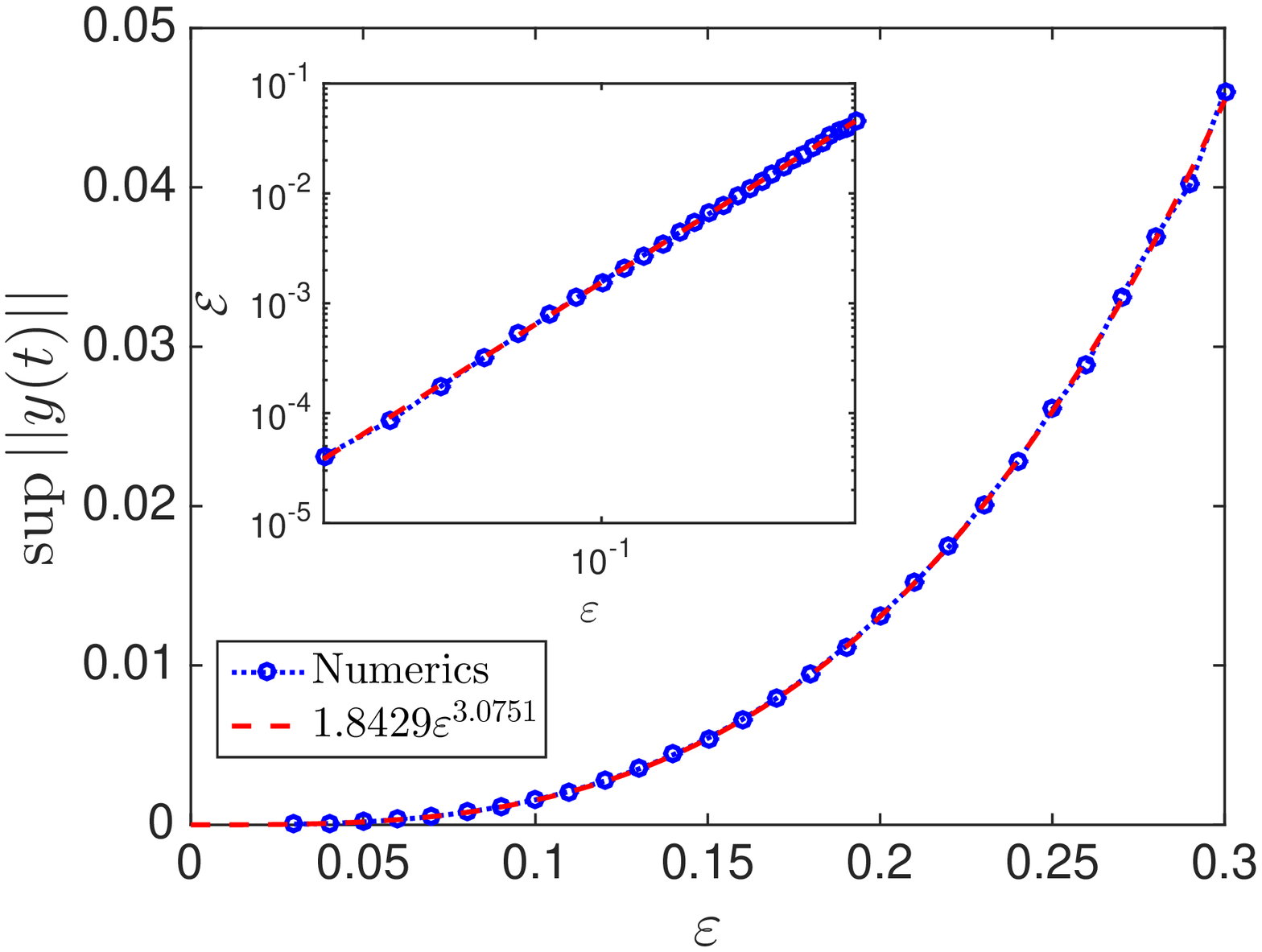}\label{fig1d}}
	\caption{{
			Panels (a,b) show numerical solutions of the dKG equation (blue circles) and the corresponding rotating wave approximations from the dNLS equation (red stars) at two time instances $t=100$ and $t=1000$. Here, $\varepsilon=0.1$. Panel (c) is the time dynamics of the error. Panel (d) is the maximum error of the dNLS approximation within the interval $t\in[0,2/\varepsilon^2]$ for varying $\varepsilon\to0$. In the picture, we also plot the best power fit of the error, showing the same order as in Theorem \ref{theorem1}.
		} 
	}
	\label{fig1}
\end{figure}

In Section \ref{sec:1}, we have discussed that small-amplitude solutions of the parametrically driven dKG equation \eqref{Parametric} can be approximated by ansatz \eqref{ansatzP}, that satisfies the dNLS equation \eqref{dNLS} with a residue of order $\mathcal{O}(\varepsilon^{5})$. We then showed in Section \ref{sec3} that the difference between solutions of Eqs.\ \eqref{Parametric} and \eqref{dNLS}, that are initially of at most order $\mathcal{O}(\varepsilon^3)$, will be of the same order for some finite time. In this section, we will illustrate the results numerically. 

We consider Eq.\ \eqref{Parametric} 
as an initial value problem in the domain $D = \{(n, t)|(n, t) \in [1, N] \times [0, {\widetilde{T}}]\}$, $N\in\mathbb{N},{\widetilde{T}} \in\mathbb{R}$. The differential equation is then integrated using the fourth order Runge-Kutta method. Simultaneously we also need to integrate Eq.\ \eqref{dNLS}. As the initial data of the dKG equation, we take 
\begin{eqnarray}
u_j(0)=\phi_j(0),\quad \dot{u}_j(0)=\left.\dot{\phi}_j(t)\right|_{t=0}.
\end{eqnarray}
In this way, the initial error $y(0)$ between $u_j(0)$ and $\phi_j(0)$ (see \eqref{dec}) will satisfy $\lVert y(0)\rVert_{\ell^2}=0<C_0\varepsilon^3$, for any $C_0>0$.

In the following, we take %breathers and discrete solitons for 
the parameter values $\Lambda=-3$, ${h}=-0.5$, and $\hat{\alpha}=0.1$. The nonlinearity is considered to be 'softening', which without loss of generality is taken to be  $\xi=-1$. This choice of nonlinearity coefficient will yield the dNLS equation \eqref{dNLS} with a 'focusing' or 'attractive' nonlinearity. The case $\xi=+1$, i.e., 'stiffening' nonlinearity, corresponds to the 'defocusing' or 'repulsive' dNLS equation \eqref{dNLS}. In the dNLS description, the attractive and repulsive cases are mathematically equivalent through a "staggering” transformation $(-1)^j$, that reverses the phases in every second lattice. %Thus, for $\xi=+1$ discrete breathers of \eqref{Parametric} with small amplitudes and discrete solitons of \eqref{dNLS} will have exponentially decaying staggered tails.

In our first simulation, we consider the fundamental site-centred discrete soliton of the dNLS equation, %This type of solutions is special, 
that has been considered before in, e.g., \cite{hennig1999periodic,susanto2006stability,syafwan2010discrete,syaf12}. Such solutions will satisfy \eqref{dNLS} with $\dot{A}_j=0$ and can be obtained rather straightforwardly using Newton's method. 

In Fig.\ \ref{fig1a} and \ref{fig1b} we plot the solutions $u_j(t)$ and $\phi_j(t)$ for $\varepsilon=0.1$ at two different subsequent times. In panel (c) of the same figure, we plot the error $\lVert y(t)\rVert$ between the two solutions, which shows that it increases. However, the increment is bounded within the proven estimate $\sim C\epsilon^3$ for quite a long while.

We have performed similar computations for several different values of $\varepsilon\to0$. Taking $\tau_0=1$, we record sup$_{t\in[0,2\tau_0/\varepsilon^2]}\lVert y(t)\rVert$ for each $\varepsilon$. We plot in Fig.\ \ref{fig1d} the maximum error as a function of $\varepsilon$. We also plot in the same panel the best power fit in the nonlinear least squares sense% in the form of $f(\varepsilon) = a\varepsilon^b$, with.
, which agrees with Theorem \ref{theorem1}.

Discrete solitons of the dNLS equation expectedly approximate discrete breathers of the dKG equation. Our simulations above indicate this as well. Yet, how close are they with each other? In \ref{tambh}, we show numerically that they are $\mathcal{O}(\varepsilon^3)$-apart, which interestingly seem to follow the result in Theorem \ref{theorem1}.

\begin{figure}[bt]
	\centering
	\subfloat[]{\includegraphics[scale=0.40]{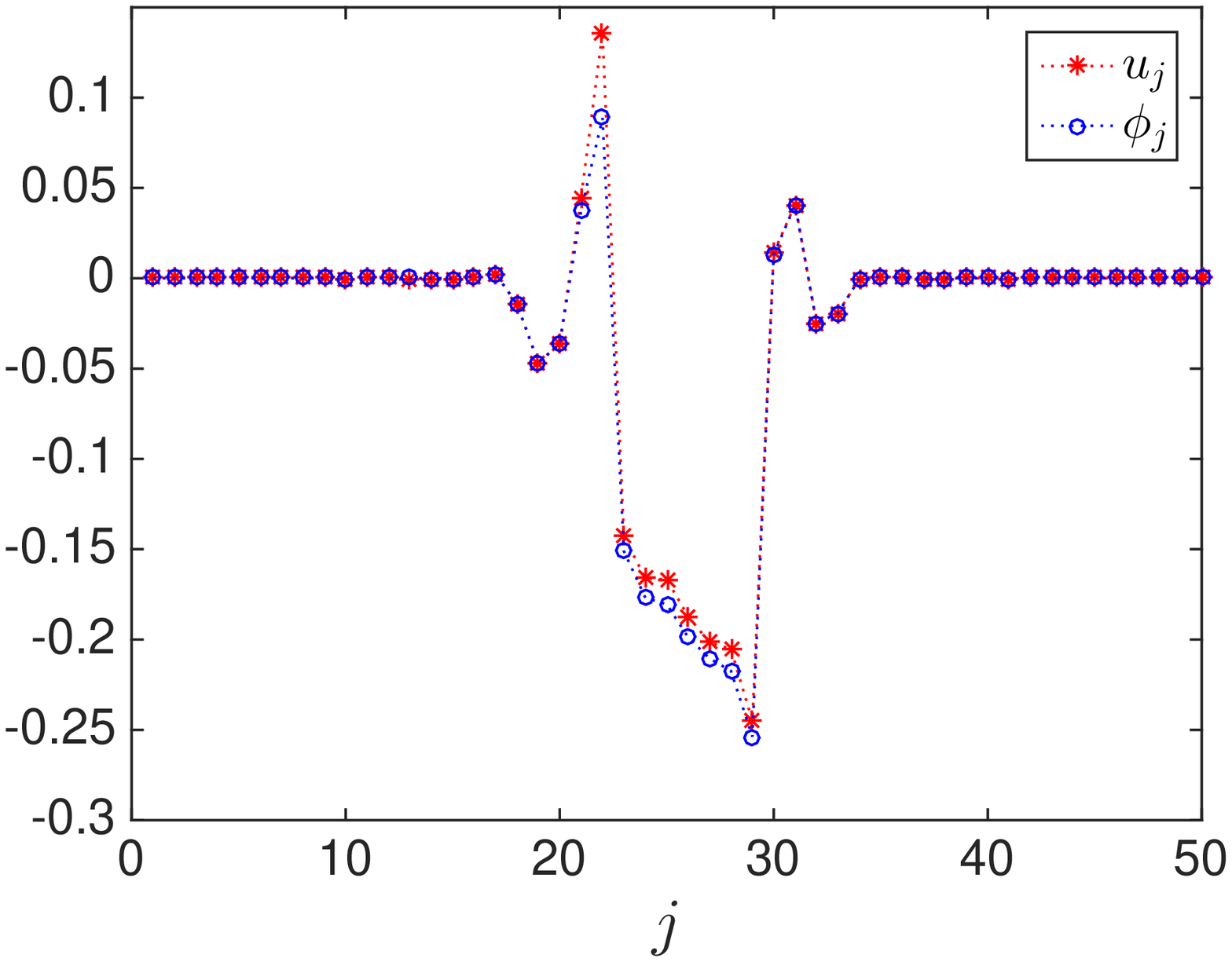}\label{fig2a}}
	\subfloat[]{\includegraphics[scale=0.40]{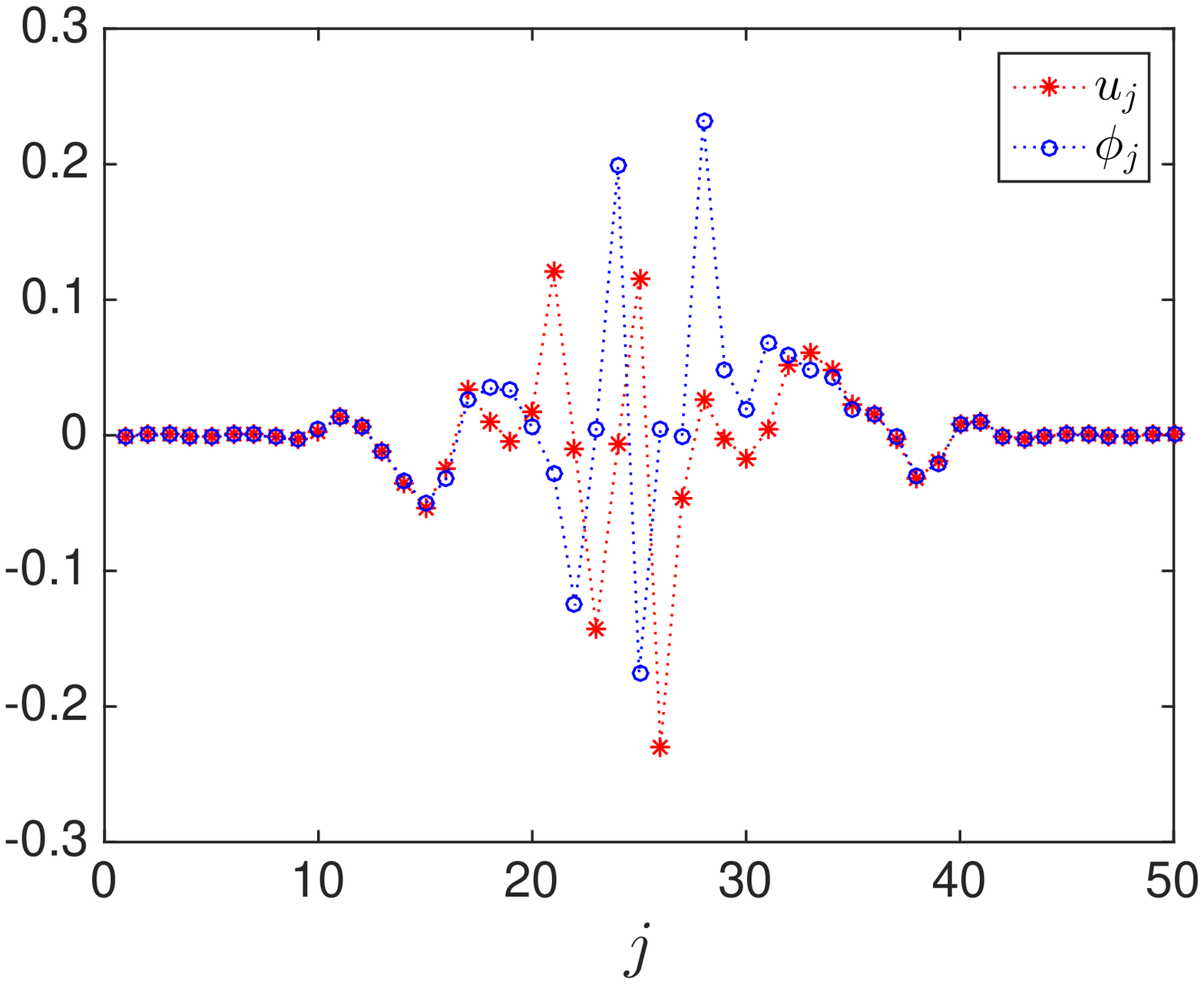}\label{fig2b}}\\
	\subfloat[]{\includegraphics[scale=0.40]{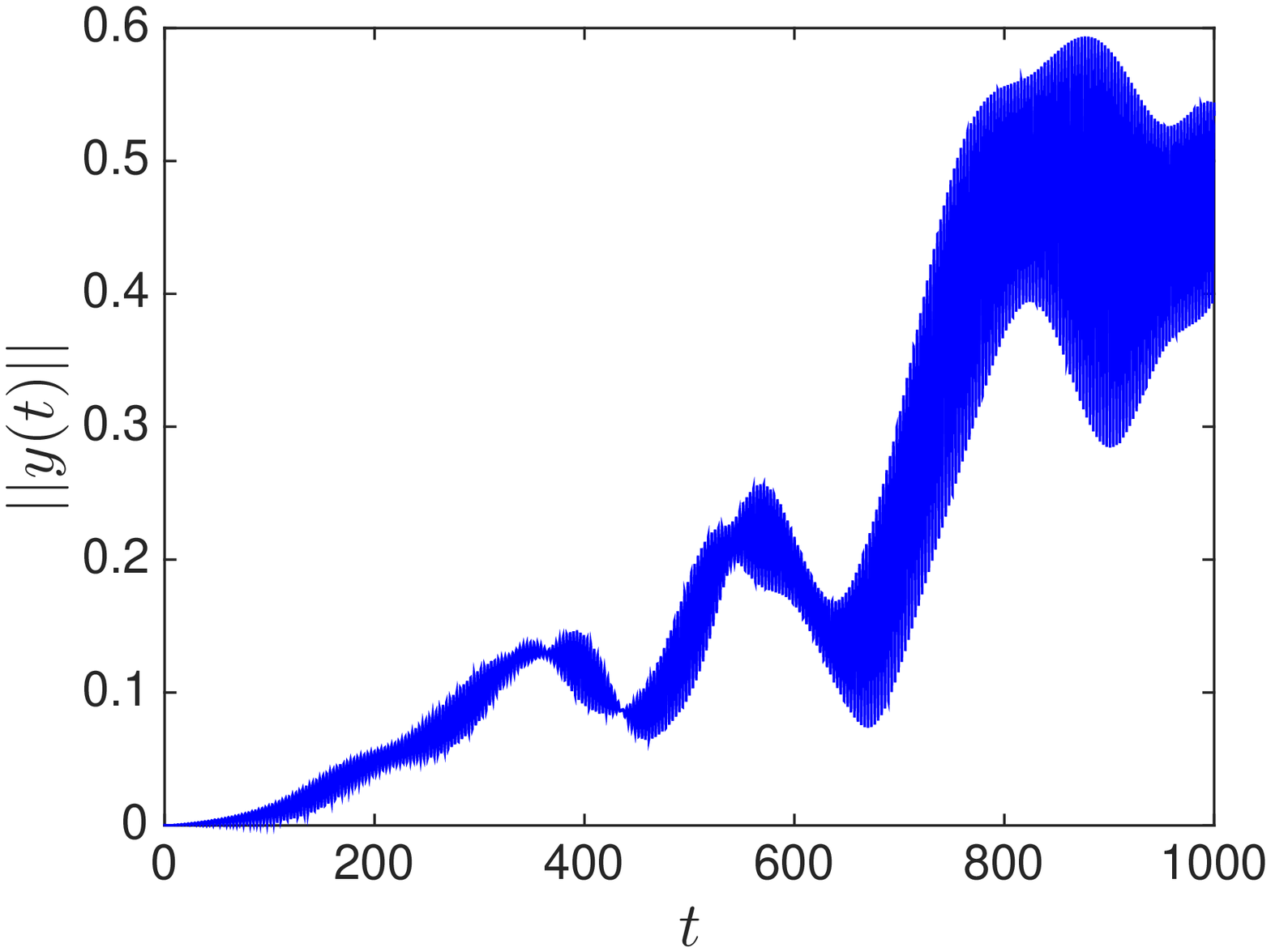}\label{fig2c}}
	\subfloat[]{\includegraphics[scale=0.40]{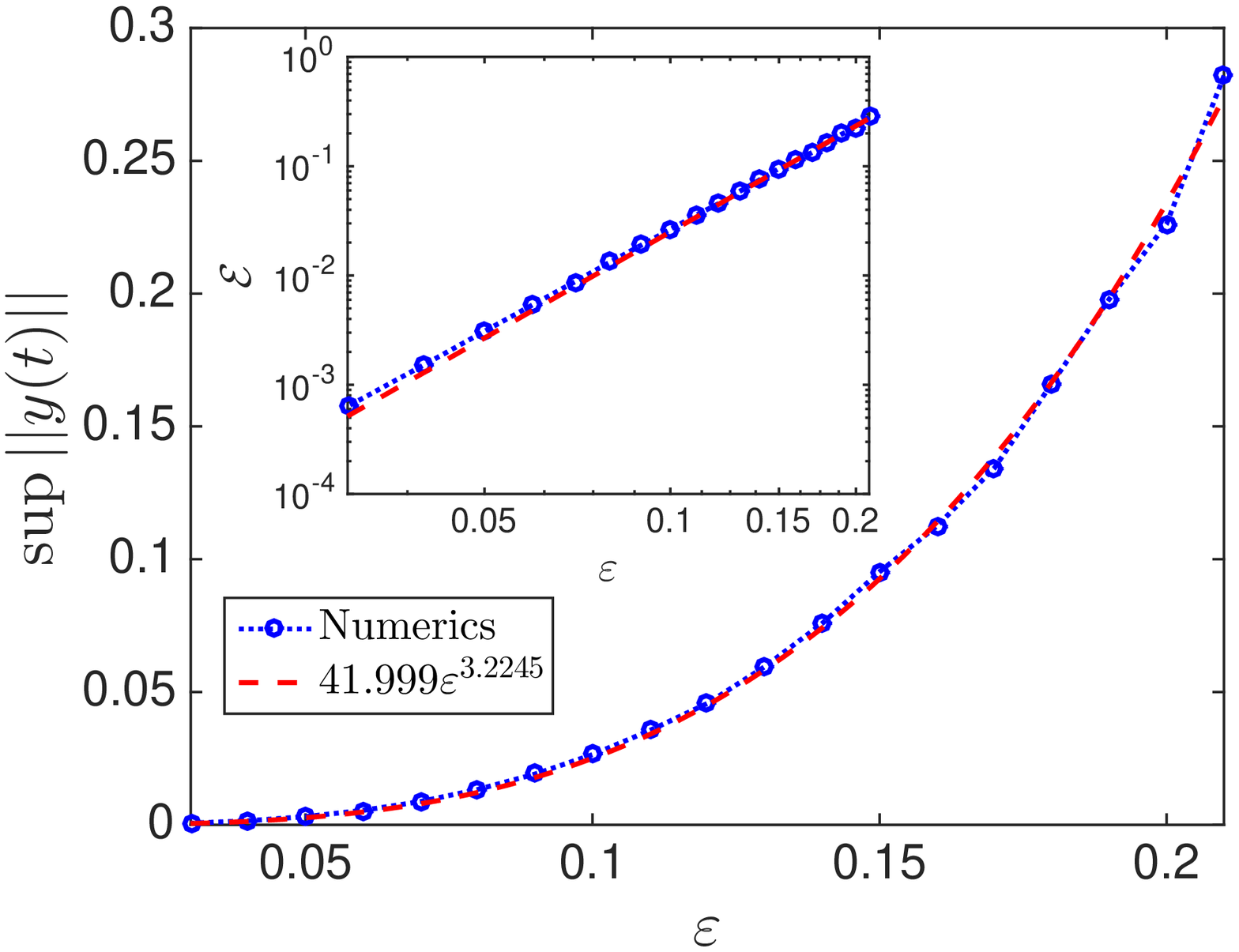}\label{fig2d}}
	\caption{{
			The same as Fig.\ \ref{fig1}, but for the initial data \eqref{in2}.
		} 
	}
	\label{fig2}
\end{figure}

In our second simulations, we consider a perhaps more interesting initial condition in the form of a clustered state:
\begin{eqnarray}
A_j=e^{0.05ij},\,j=21,\dots,30,
\label{in2}
\end{eqnarray}
and $A_j$ vanishes elsewhere. The dynamics at some instances are shown in Fig.\ \ref{fig2}. We also computed the maximum error made by the rotating wave approximation within the time interval $[0,2\tau_0/\varepsilon^2]$, with $\tau_0$ taken to be 1, and plotted it in Fig.\ \ref{fig2d} for several values of $\varepsilon$. The best power fit to the error also shows the same behaviour, i.e., the error is $\mathcal{O}(\varepsilon^3)$.% for $t\sim\mathcal{O}(2/\varepsilon^2)$. 

\section*{Acknowledgement}
YM thanks MoRA (Ministry of Religious Affairs) Scholarship of the Republic of Indonesia for a financial support. The research of FTA and BEG are supported by PDUPT Kemenristekdikti 2018. RK gratefully acknowledges financial support from Lembaga Pengelolaan Dana Pendidikan (Indonesia Endowment Fund for Education) (Grant No.\ - Ref: S-34/LPDP.3/2017). The authors are grateful to the four reviewers for their comments that improved the quality of the manuscript. 

\section*{Compliance with Ethical Standards}	
\noindent\textbf{Conflicts of interest}: The authors declare that they have no conflict of interest.

\appendix
\setcounter{subfigure}{0} 
\section{Discrete breathers vs.\ discrete solitons}
\label{tambh}

\begin{figure}[tbhp]
	\centering
	\subfloat[]{\includegraphics[scale=0.40]{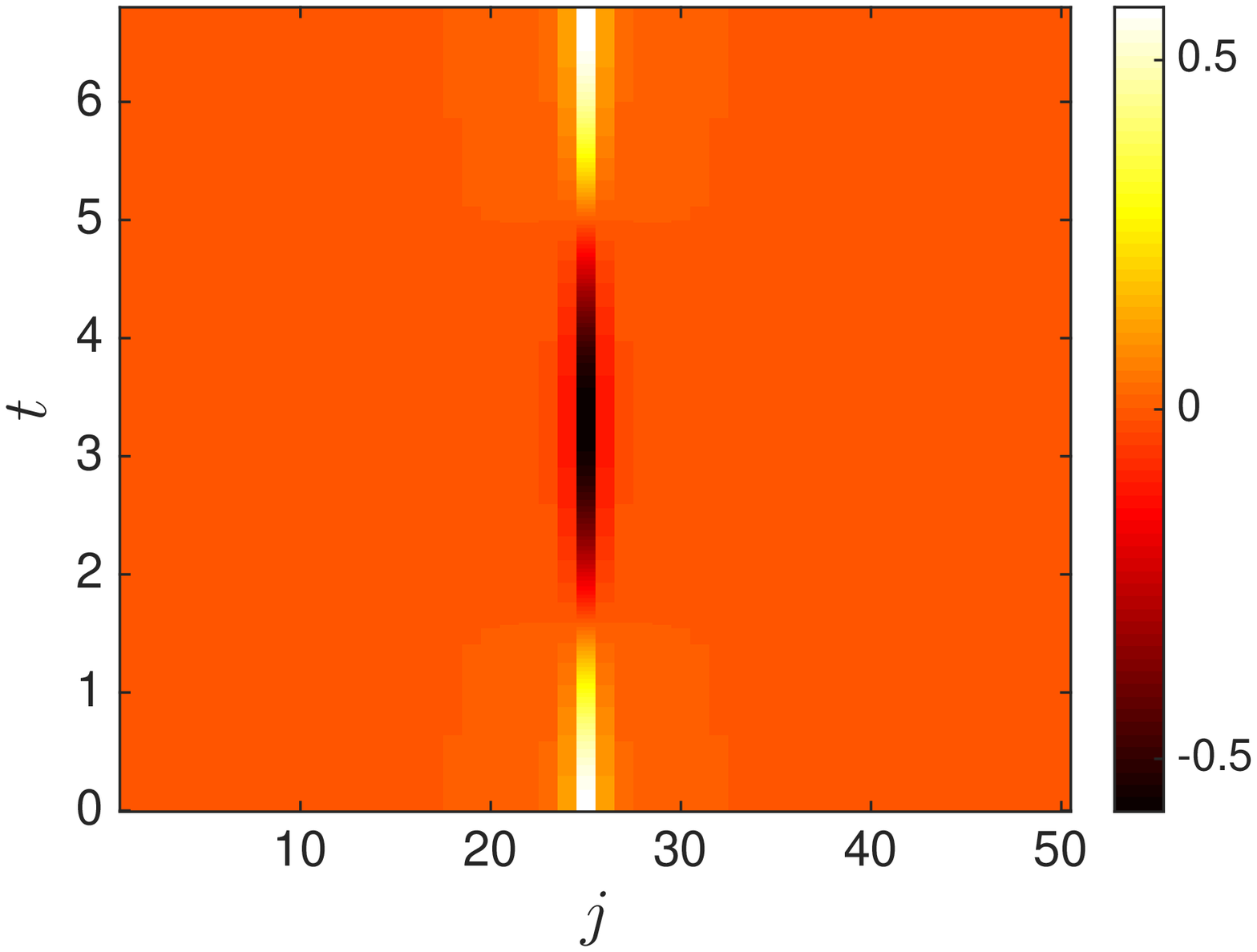}\label{subfig:prof_kg_eps_0_05}}
	\subfloat[]{\includegraphics[scale=0.40]{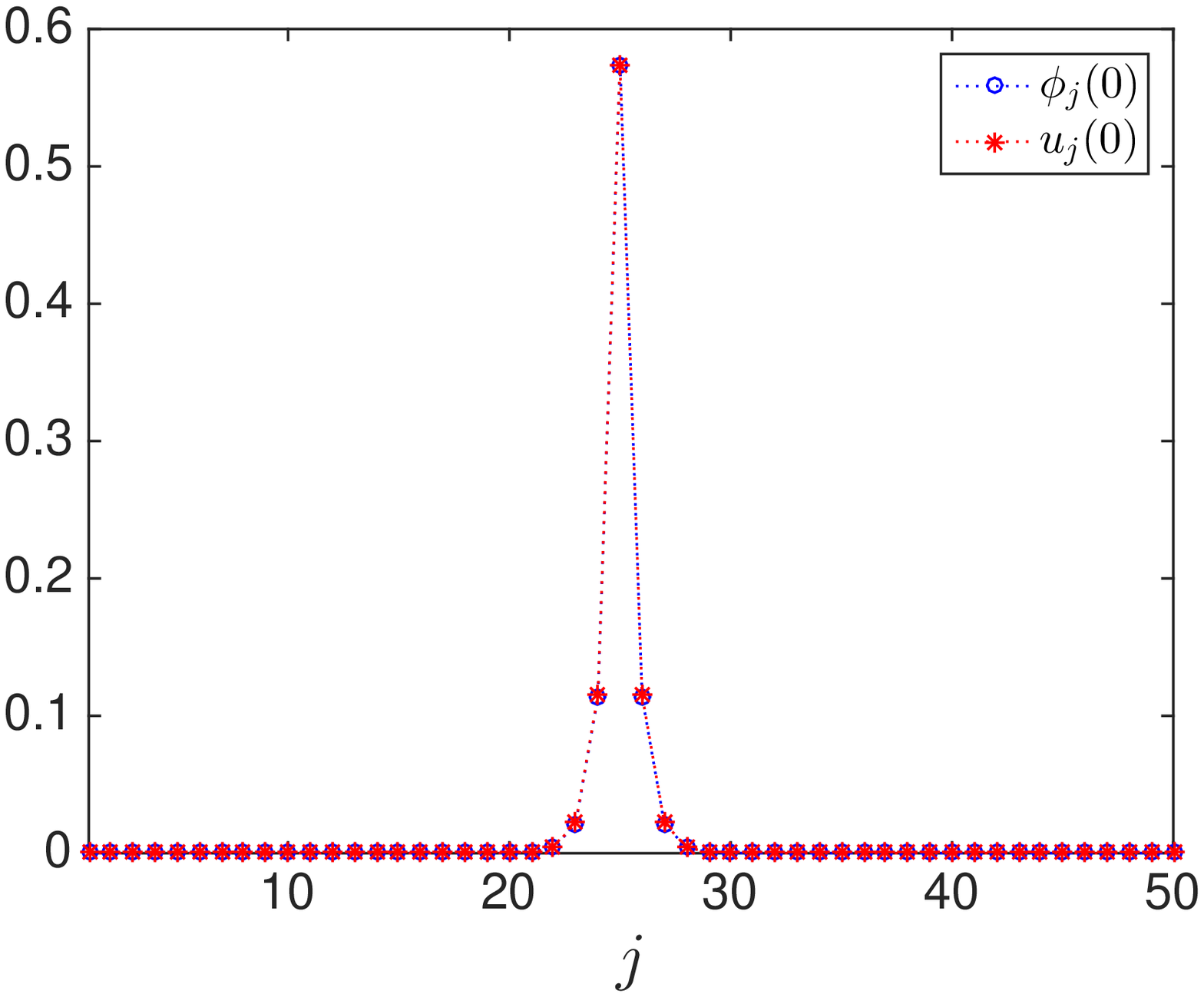}\label{subfig:prof_compare_kg_dnls_eps_0_05}}
	\caption{A breather of \eqref{Parametric} for ${\varepsilon^2}=0.05$.
		Panel (a) shows the dynamics of the solution in one period, while panel (b) presents the comparison of the breather and its approximation \eqref{ansatzP} at $t=0$, with $A_j$ being a discrete soliton of Eq.\ \eqref{dNLS}.
	}
	\label{fig:onsite_stability_compare_kg_dnls}
\end{figure}

While discrete solitons of the dNLS  \eqref{dNLS} correspond to spatially localised, but time-independent solutions of the equation and can be computed rather immediately, discrete breathers are spatially localised, but temporally periodic solutions of the dKG equation \eqref{Parametric}.

There are several numerical methods that have been developed to seek for discrete breathers, see the review \cite{flac04,flac08}. Here, we use a Fourier series representation by writing $u_j(t)$ as
\begin{equation}
u_j(t)=\sum_{k=1}^{K}a_{j,k}\cos\left(\left(k-1\right)\Omega t\right)+b_{j,k}\sin\left(k\Omega t\right),
\label{eq:anz}
\end{equation}
where $a_{j,k}$ and $b_{j,k}$ are the Fourier coefficients and $K\gg1$ is the number of Fourier modes used in our numerics. By substituting the expansion \eqref{eq:anz} into the dKG equation \eqref{Parametric}, multiplying with each mode, and integrating it over the time-period $2\pi/\Omega$, one will obtain coupled nonlinear algebraic equations for the coefficients $a_{j,k}$ and $b_{j,k}$.
Then, we use a Newton's method to solve the equations. Breather solutions will be obtained by choosing a proper initial guess for the coefficients $a_{j,k}$ and $b_{j,k}$.

\begin{figure}[tbhp]
	\centering
	{\includegraphics[scale=0.50]{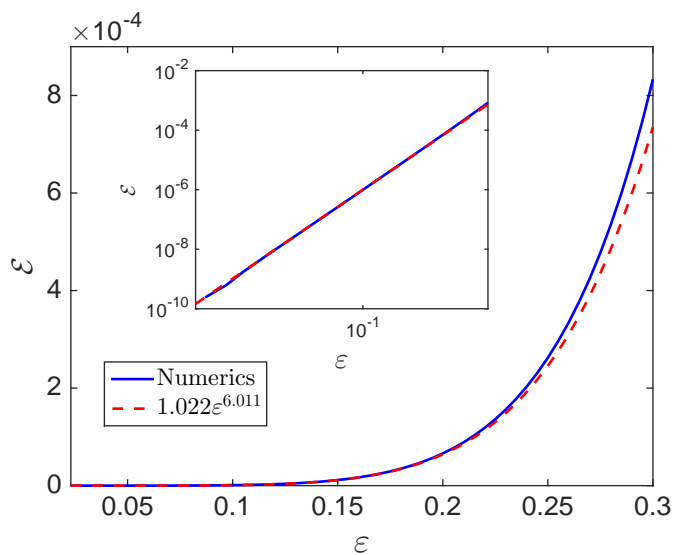}}
	\caption{Plot of the maximum difference \eqref{erro} of the discrete Schr\"odinger approximation (\ref{dNLS}) for varying $\varepsilon$. The dashed line is the best power fit, indicated in the legend. The inset shows the curves in a log scale. 
	}
	\label{subfig:error_plot}
\end{figure}

Once a discrete soliton or a discrete breather is found, it is naturally relevant to study their stability. 

Let $\tilde{A}_j=\hat{x}_j + i\hat{y}_j$ be a discrete soliton of the dNLS equation. We determine its linear stability by writing 
\begin{equation}
A_j=\tilde{A}_j+\delta(\hat{x}_j + i\hat{y}_j)e^{\lambda {\tau}}.
\label{eq:anz_stab}
\end{equation}
%To determine linear stability of the solution $\tilde{A}_j$, we write
%\begin{equation}
%	A_j=\tilde{A}_j+\delta(\hat{x}_j + i\hat{y}_j).
%\label{eq:anz_stab}
%\end{equation}
Substituting \eqref{eq:anz_stab} into \eqref{dNLS} and linearising around $\delta=0$ will yield the eigenvalue problem
\begin{equation}
\lambda\left(
\begin{array}{c}
\hat{x}_j\\
\hat{y}_j
\end{array}
\right)=
\left(
\begin{array}{cc}
-6\xi{x}_j{y}_j-\alpha& \Delta+\Lambda-h-3\xi\left({x}_j^2+3{y}_j^2\right)\\
-\Delta-\Lambda-h+3\xi\left(3{x}_j^2+{y}_j^2\right)&6\xi{x}_j{y}_j-\alpha
\end{array}
\right)
\left(
\begin{array}{c}
\hat{x}_j\\
\hat{y}_j
\end{array}
\right).
\label{em2}
\end{equation}
In here, the solution $\tilde{A}_j$ is said to be linearly stable when all of the eigenvalues $\lambda$ have Re$(\lambda)\leq0$ and unstable when there is an eigenvalue with Re$(\lambda)>0$.

As for discrete breathers of the dKG equation, their linear stability is determined using Floquet theory that can be computed numerically as follows. Let $\tilde{u}_j(t)$ be a breather solution. By defining $u_j(t) = \hat{u}_j(t)+\delta Y_j(t)$, substituting it into Eq.\ \eqref{Parametric}, and linearising the equation around $\delta=0$, we obtain the system of linear differential-difference equations 
\begin{equation}
\begin{array}{ccl}
\dot{Y}_j&=&Z_j\\
\dot{Z}_j&=&-Y_j-3\xi \hat{u}_j^2 Y_j+\varepsilon^2\Delta_2Y_j-\alpha {Z}_j+H\cos\left(2\Omega t\right)Y_j.
\end{array}
\label{fm}
\end{equation}
Integrating Eqs.\ \eqref{fm} in the numerical domain $D$, where now $\widetilde{T}={2\pi}/{\Omega}$, and using the standard basis in $\mathbb{R}^{2N}$, i.e., $\left\{e^0_1,e^0_2,...,e^0_{2N}\right\}$ as the initial condition at $t=0$, we will obtain a set of solutions at $t=\widetilde{T}$, which is our monodromy matrix
\begin{equation}
M=\left\{
E^{\widetilde{T}}_1,E^{\widetilde{T}}_2,...,E^{\widetilde{T}}_{2N}
\right\}
\in \mathbb{R}^{2N\times2N}.
\label{mm}
\end{equation}
The breather $\tilde{u}_j(t)$ is said to be linearly stable when all the eigenvalues $\lambda_\text{dKG}$ of the monodromy matrix $M$, which are known as Floquet multipliers, lie inside or on the unit circle and unstable when there exists at least one $\lambda_{dKG}$ lying outside the unit circle. Note that in the presence of damping, the set of continuous multipliers will lie on a circle of radius $e^{\frac{-\alpha \pi}{\Omega}}$, see \cite{mari01,mart03}.

\begin{figure}[tbhp!]
	\centering
		\subfloat[]{\includegraphics[scale=0.40]{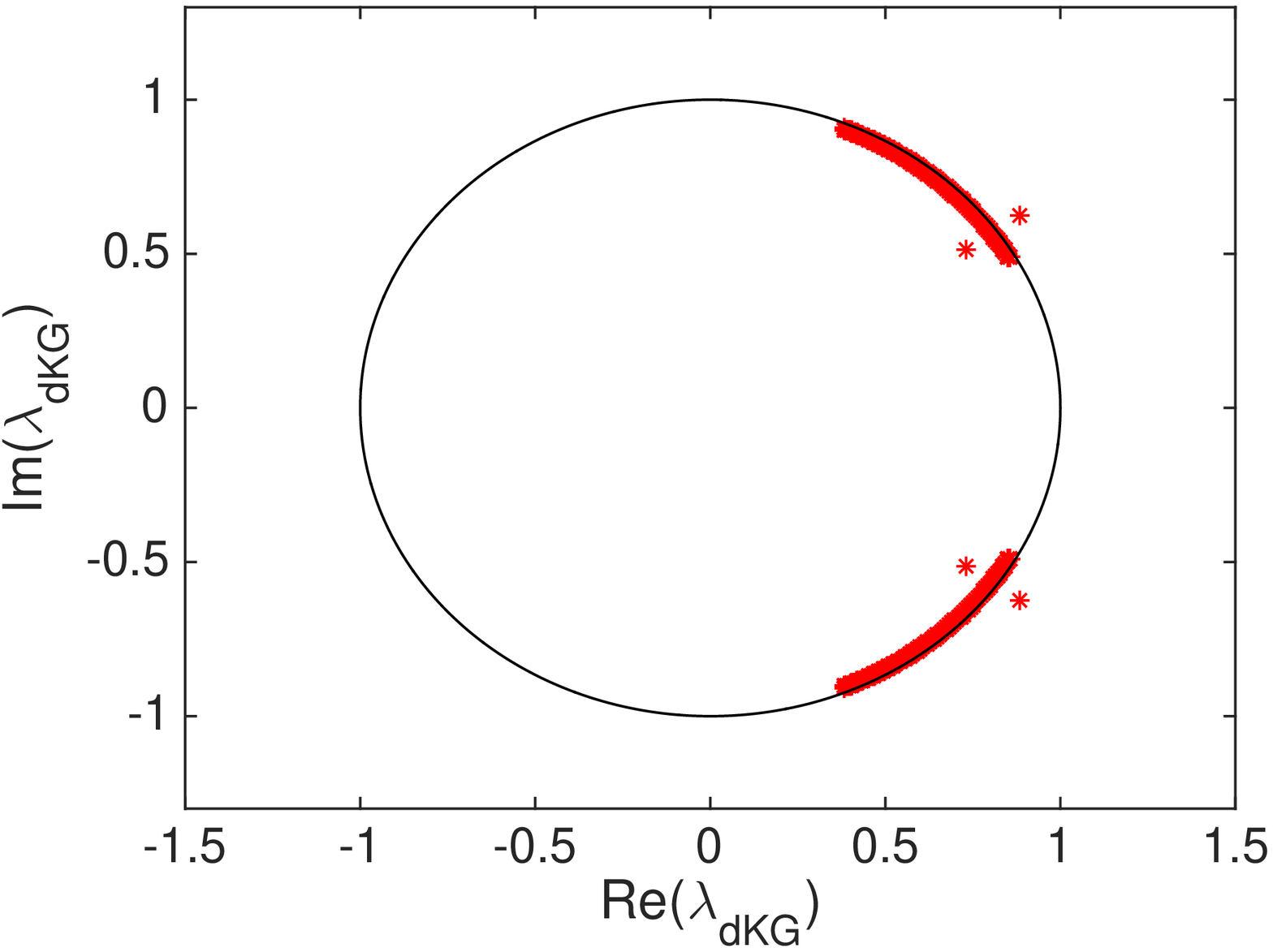}\label{subfig:eig_kg_eps_0_05}}
%		\subfloat[]{\includegraphics[scale=0.40]{eig_dnls2}\label{subfig:eig_dnls}}
	\subfloat[]{\includegraphics[scale=0.40]{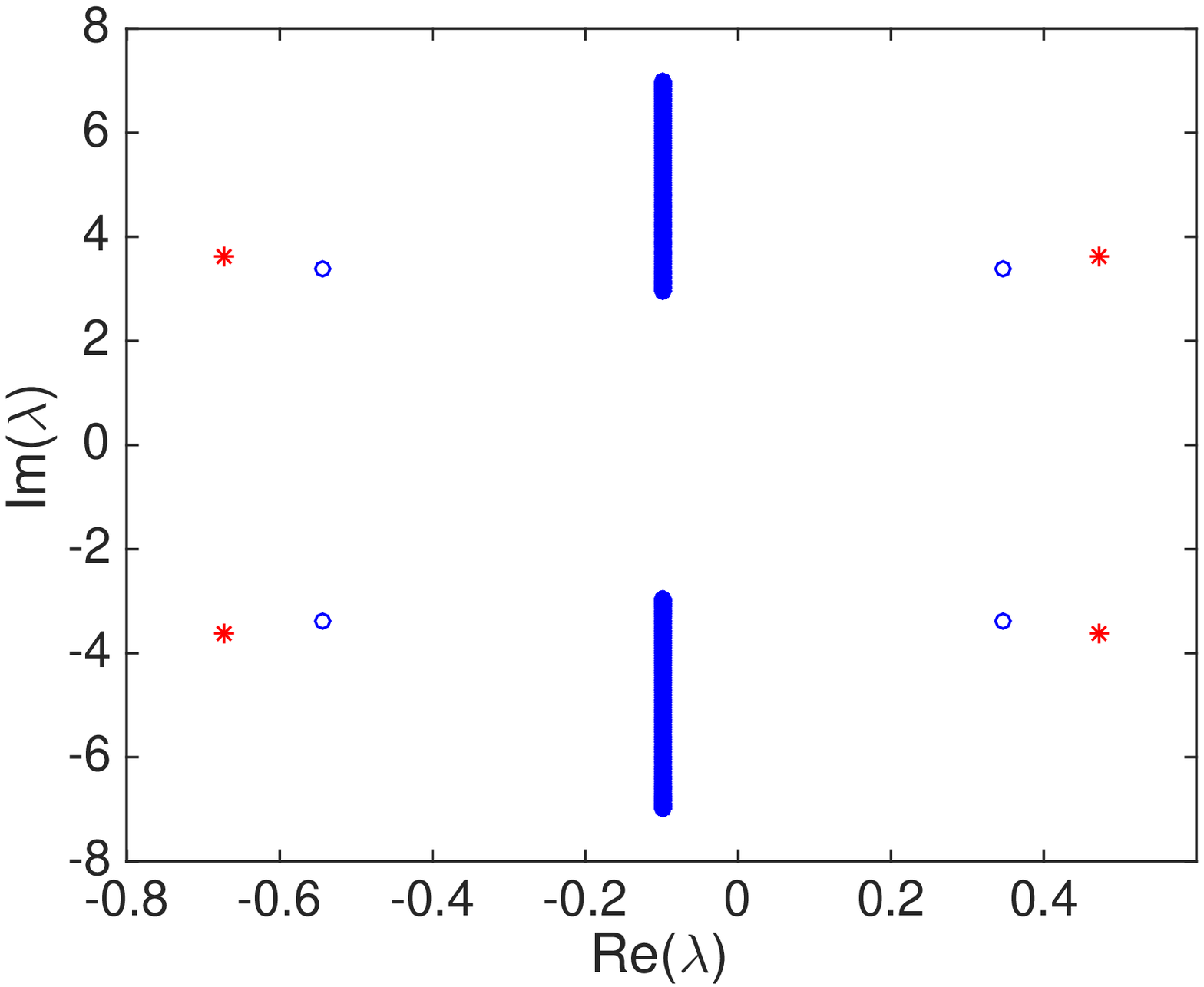}\label{subfig:eig_kg_dnls_eps2_0_05}}\\
%	\subfloat[]{\includegraphics[scale=0.40]{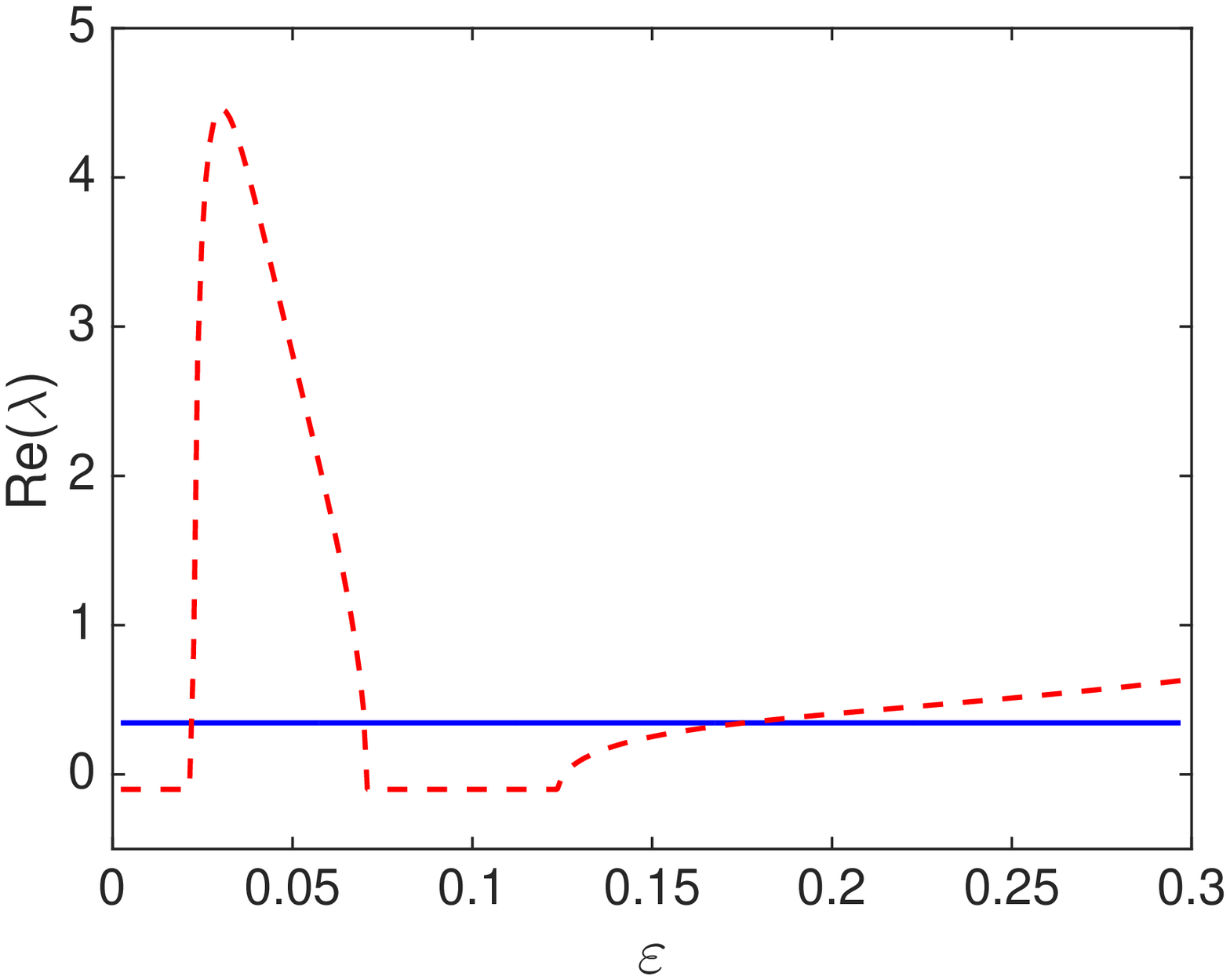}\label{subfig:Re_eig_kg_dnls_vary_eps}}
%	\subfloat[]{\includegraphics[scale=0.40]{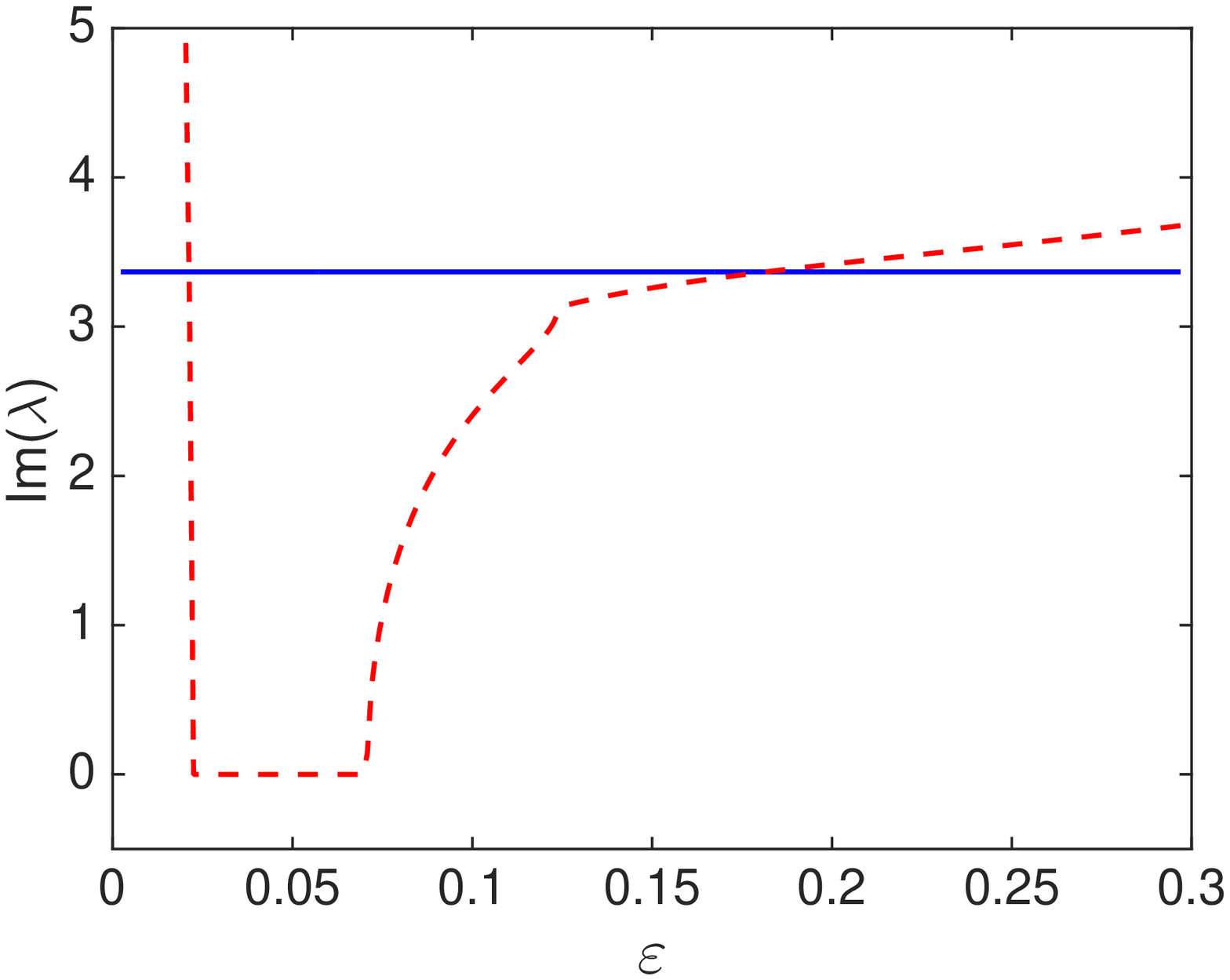}\label{subfig:Im_eig_kg_dnls_vary_eps}}
%	\subfloat[]{\includegraphics[scale=0.29]{eig_kg_dnls}\label{subfig:eig_kg_dnls}}
	\subfloat[]{\includegraphics[scale=0.40]{Re_eig_kg_dnls_vary_eps}\label{subfig:Re_eig_kg_dnls_vary_eps}}
\subfloat[]{\includegraphics[scale=0.40]{Im_eig_kg_dnls_vary_eps}\label{subfig:Im_eig_kg_dnls_vary_eps}}
\caption{{
			Panel (a) shows Floquet multipliers of the breather in Fig.\ \ref{subfig:prof_kg_eps_0_05}, showing the linear instability of the solution. Panel (b) presents the eigenvalues of the corresponding discrete soliton of DNLS equation \eqref{dNLS}. Red stars in the panel are the critical multipliers in panel (a), that have been transformed following the relation \eqref{tran}. %the comparison of the largest eigenvalue between the monodromy matrix after performing transformation $\lambda=2{\log(\lambda_{dkg})}/{(T\epsilon^2)}$ (red stars) and
		Panels (c) and (d) compare the real and imaginary part of the critical eigenvalue of the discrete soliton (blue solid line) and the critical multiplier of the corresponding breather of the dKG equation (red dashed line) for varying $\varepsilon$. 
	} 
	}
	\label{fig:onsite_stability_test}
\end{figure}

\begin{figure}[tbph]
	\centering
	\subfloat[]{\includegraphics[scale=0.40]{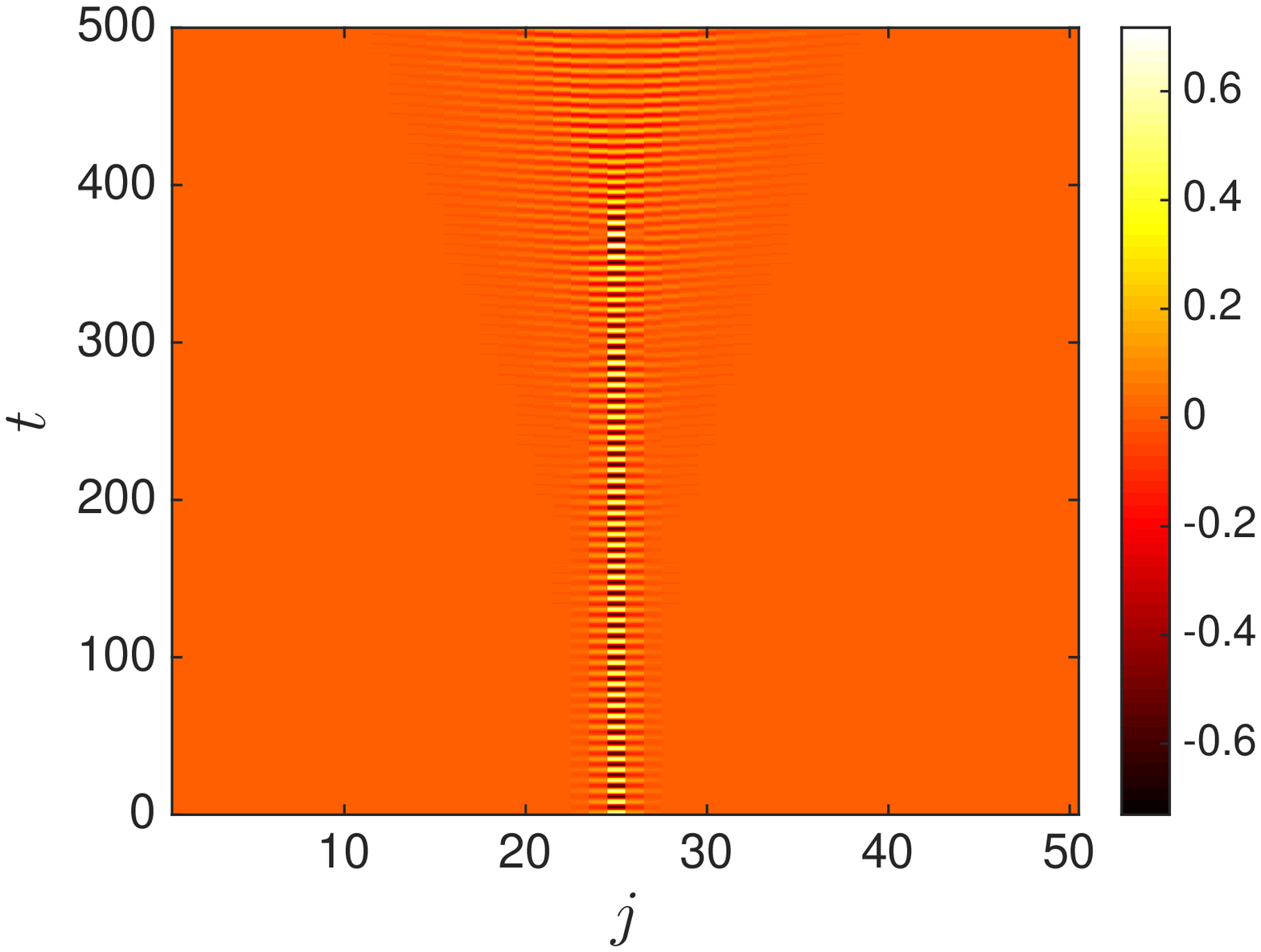}\label{subfig:KG_dynamics_eps2_0_05}}
	\subfloat[]{\includegraphics[scale=0.40]{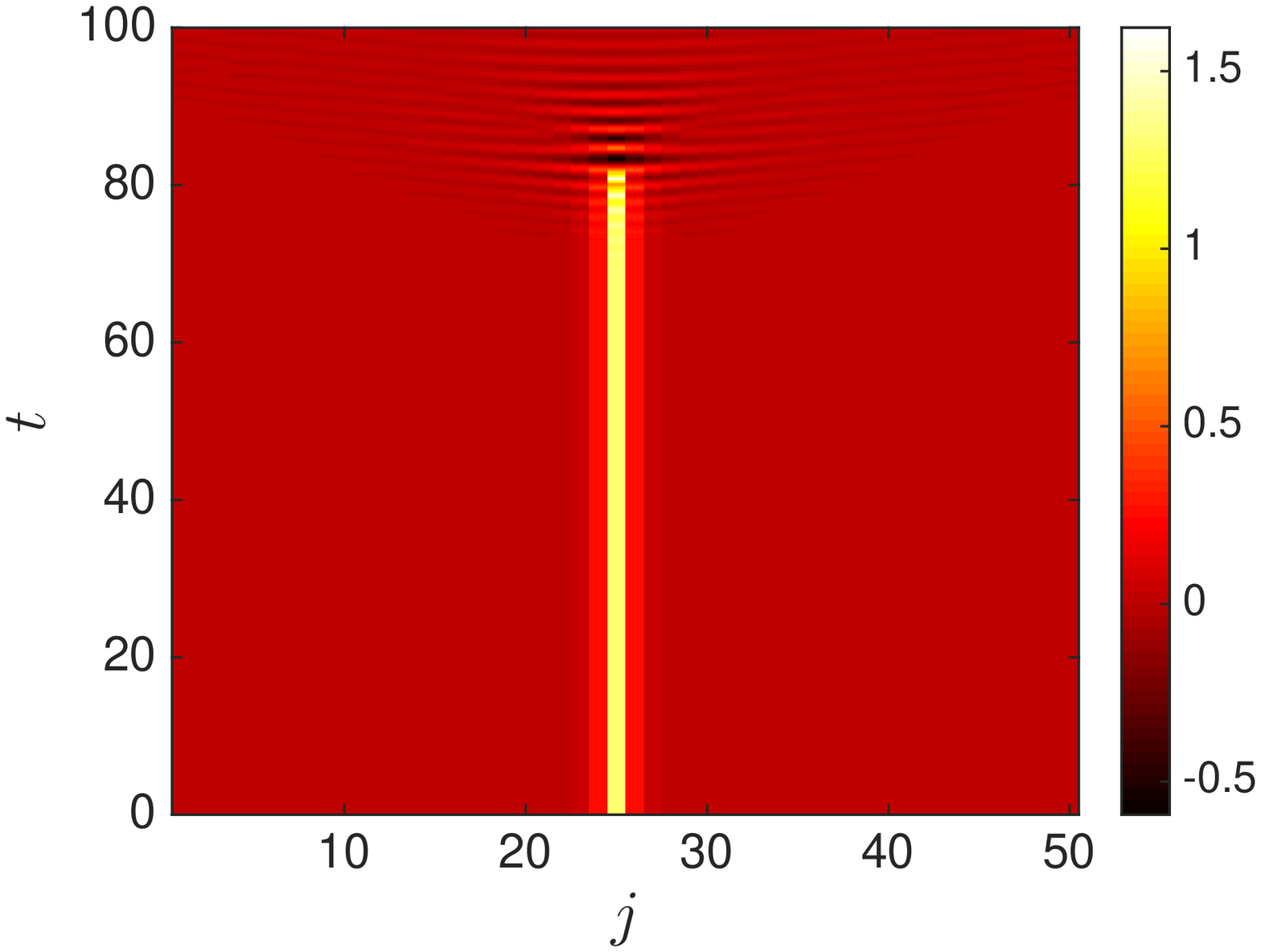}\label{subfig:DNLS_dynamics_eps2_0_05}}
	\caption{
{
			%		Panel (a) and (b) show the comparison of the largest eigenvalue between the monodromy matrix after performing transformation $\lambda=2{\log(\lambda_{dkg})}/{(T\epsilon^2)}$ (red dashed line) and the DNLS equation \eqref{dNLS} (blue solid line) for varying $\varepsilon$. 
					Time dynamics of the unstable breather (a) and discrete soliton (b) shown in Fig.\ \ref{fig:onsite_stability_compare_kg_dnls}. Note that the time variable in the second panel has been scaled to the original one. 
			}
	}
	\label{fig:eig_kg_dnls_vary_eps}
\end{figure}

In the following, we focus on breathers and discrete solitons for the same parameter values as in Section \ref{sec4}, i.e.,  $\Lambda=-3$, ${h}=-0.5$, $\hat{\alpha}=0.1$, and $\xi=-1$. For $\xi=+1$, due to the staggering transformation explained briefly in Section \ref{sec4}, discrete breathers of \eqref{Parametric} with small amplitudes and discrete solitons of \eqref{dNLS} will have exponentially decaying staggered tails.

For our computations, we solve the dKG equation \eqref{Parametric} for periodic in time solutions using the number of Fourier modes $K=3$ and the lattice sites $N=50$. Larger numbers, i.e., $K=9$ and $N=400$, have been used as well to make sure that the results are independent of the lattice size and the number of modes.
%where the parameters are already sufficient to describe the solution behaviours.

We present a breather solution and its time dynamics within one period in Fig.\ \ref{fig:onsite_stability_compare_kg_dnls} for $\varepsilon^2={0.05}$.
In Fig.\ \ref{subfig:prof_compare_kg_dnls_eps_0_05}, we compare the breather in panel (a) with its corresponding approximation \eqref{ansatzP}, where $A_j$ is the discrete soliton of Eq.\ \eqref{dNLS}.
One can note that they are in good agreement.

By defining a maximum difference between breathers of \eqref{Parametric} and their approximations \eqref{ansatzP} using discrete solitons of \eqref{dNLS} as
\begin{equation}
\mathcal{E}=\sup_{0\leq t\leq2\pi/\Omega}\lVert y(t) \rVert_{\ell^2},
\label{erro}
\end{equation}
we depict the error for varying $\varepsilon$ in Fig.\ \ref{subfig:error_plot}}. We also present in the same panel, the best power fit to the numerical results, which interestingly follows the theoretical prediction of the error in Theorem \ref{theorem1}, i.e., $\sim\varepsilon^{3}$.

We have computed the corresponding monodromy matrix for the stability of the solution in Fig.\ \ref{subfig:prof_kg_eps_0_05}. The Floquet multipliers are plotted in Fig.\ \ref{subfig:eig_kg_eps_0_05}. We have also solved Eq.\ \eqref{em2} for the corresponding discrete soliton of the dNLS equation \eqref{dNLS} and plot the eigenvalues $\lambda$ in Fig.\ \ref{subfig:eig_kg_dnls_eps2_0_05}, where interestingly we obtain that both solutions experience the same type of instability (i.e., oscillatory instability as the critical multipliers and eigenvalues are both complex valued). For the dNLS solitons, this is in agreement with the results of Refs.\ \cite{susanto2006stability}. For the dKG breather, the instability is similar to that reported in \cite{cuev09}. Moreover, from the time scales that lead to Eqs.\ \eqref{ansatzP}--\eqref{dNLS}, we can obtain the relation between Floquet multipliers $\lambda_\text{dKG}$ of the dKG monodromy matrix \eqref{mm} and eigenvalues $\lambda$ of the dNLS stability matrix \eqref{em2}, i.e., 
\begin{equation}
\displaystyle\lambda_\text{dKG}\sim e^{\pi\varepsilon^2\lambda/\Omega}.
\label{tran}
\end{equation}
Using the transformation, we depict in Fig.\ \ref{subfig:eig_kg_dnls_eps2_0_05} the critical multipliers as red stars, where we learn that the localised solutions do not only have the same type of instability, but their critical eigenvalues have relatively comparable magnitudes. 

While in Fig.\ \ref{subfig:error_plot} we plot the error made by the dNLS solitons in approximating the dKG breathers, in Figs.\ \ref{subfig:Re_eig_kg_dnls_vary_eps} and \ref{subfig:Im_eig_kg_dnls_vary_eps} we compare their critical eigenvalues and multipliers for varying $\varepsilon$. We obtain that the breathers and solitons do not necessarily share the same type of stability. In fact, there are intervals of coupling constant $\varepsilon$ on which the breathers are stable, even though the corresponding dNLS solitons are unstable. This observation nonetheless does not violate the analytical results in Sections \ref{sec:1}--\ref{sec3}.

%\begin{figure}[tb]	\centering	{\includegraphics[scale=0.50]{error_time_plot}}%{subfig:error_time_plot}}	\caption{Time evolution of the error $\lVert y(t) \rVert_{\ell^2},$ that corresponds to the dynamics depicted in Figs.\ \ref{subfig:KG_dynamics_eps2_0_05} and \ref{subfig:DNLS_dynamics_eps2_0_05}. 	}	\label{fig:err2}\end{figure}

Figures \ref{subfig:KG_dynamics_eps2_0_05} and \ref{subfig:DNLS_dynamics_eps2_0_05} show the typical dynamics of the oscillatory instability of the breather and its corresponding discrete soliton approximation. We can observe that in both cases the instability destroys localised solutions, i.e., we also obtain qualitative agreement in the instability dynamics. %We show in Fig.\ \ref{fig:err2} the difference between solution dynamics in both panels, i.e., $\lVert y(t) \rVert_{\ell^2}$. %belonging the solutions in Figs.\ \ref{subfig:KG_dynamics_eps2_0_05} and \ref{subfig:DNLS_dynamics_eps2_0_05}, where 
%It shows that the error is indeed bounded, below the estimated rate for some finite time. 

%\begin{figure}[tbph]
%	\centering
%	\subfloat[]{\includegraphics[scale=0.40]{KG_dynamics_eps2_0_05}\label{subfig:st1}}
%	\subfloat[]{\includegraphics[scale=0.40]{DNLS_dynamics_eps2_0_05}\label{subfig:st2}}
%	\caption{
%		{The same as Fig.\ \ref{subfig:KG_dynamics_eps2_0_05}, but for (a) $\varepsilon=0.1$ and (b) $\varepsilon=0.05$.	}	}	\label{fig:st} \end{figure}

Finally, for the sake of completeness, we studied the typical dynamics of dKG breathers when they experience an exponential instability, i.e., the critical Floquet multipliers are real. In Figs.\ \ref{subfig:Re_eig_kg_dnls_vary_eps} and \ref{subfig:Im_eig_kg_dnls_vary_eps}, they are in a finite interval close to the uncoupled limit $\varepsilon=0$ and their absolute magnitudes are near unity. Due to these facts, we could not clearly see any instability in our simulations, even after integrating the dKG equation for quite a long while. %The plots are presented in Fig.\ \ref{fig:st}.

\section*{References}

\end{document}